\DeclareMathOperator*{\argmax}{arg\,max}  % in your preamble
\newtheorem{definition}{Definition}
\newtheorem{theorem}{Theorem}
\title{A Regularized Opponent Model with Maximum Entropy Objective}
\author{Zheng Tian$^1$\footnote{The first two authors contributed equally.}\and
Ying Wen$^{1*}$\and
Zhichen Gong$^{1}$\and
Faiz Punakkath$^{1}$\and
Shihao Zou$^{2}$ \And
Jun Wang $^{1}$\\
\affiliations
$^1$University College London\\
$^2$University of Alberta\\
\emails
\{zheng.tian, ying.wen, jun.wang\}@cs.ucl.ac.uk
}
\begin{document}

\maketitle

\begin{abstract}
In a single-agent setting, reinforcement learning (RL) tasks can be cast into an inference problem by introducing a binary random variable $o$, which stands for the ``optimality". In this paper, we redefine the binary random variable $o$ in multi-agent setting and formalize multi-agent reinforcement learning (MARL) as probabilistic inference. We derive a variational lower bound of the likelihood of achieving the optimality and name it as Regularized Opponent Model with Maximum Entropy Objective (ROMMEO). From ROMMEO, we present a novel perspective on opponent modeling and show how it can improve the performance of training agents theoretically and empirically in cooperative games. To optimize ROMMEO, we first introduce a tabular Q-iteration method ROMMEO-Q with proof of convergence. We extend the exact algorithm to complex environments by proposing an approximate version, ROMMEO-AC. We evaluate these two algorithms on the challenging iterated matrix game and differential game respectively and show that they can outperform strong MARL baselines.
\end{abstract}
\section{Introduction}
Casting decision making and optimal control as an inference problem have a long history, which dates back to~\cite{kalman1960} where the Kalman smoothing is used to solve optimal control in linear dynamics with quadratic cost. Bayesian methods can capture the uncertainties regarding the transition probabilities, the rewards functions in the environment or other agents' policies. This distributional information can be used to formulate a more structured exploration/exploitation strategy than those commonly used in classical RL, e.g. $\epsilon$-greedy. A common approach in many works~\cite{Toussaint:2006,Rawlik:2013,Levine2013,2018abbas}  for framing RL as an inference problem is by introducing a binary random variable $o$ which represents ``optimality". By this way, RL problems are able to lend itself to powerful inference tools~\cite{Levine2018review}. However, the Bayesian approach in a multi-agent environment is less well studied.
% The estimation of gradient performance with respect to value function or and/or policy parameters can also be done more accurately while using less data. Knowledge and the explicit formulation of domain assumptions can also be naturally encoded in Bayesian approaches as priors.

In many single-agent works, maximizing entropy is part of a training agent's objective for resolving ambiguities in inverse reinforcement learning~\cite{ziebart2008maximum}, improving the diversity~\cite{Florensa2017hierachy}, robustness~\cite{Fox2015taming} and the compositionality~\cite{harrnoja2018composable} of the learned policy. In Bayesian RL, it often presents in the evidence lower bound (ELBO) for the log likelihood of optimality~\cite{HaarnojaTAL17rldep,Schulman2017equivalence,haarnoja2018softactorcritic}, commonly known as maximum entropy objective (MEO), which encourages the optimal policy to maximize the expected return and long term entropy. 
% The idea of preference over the distribution with maximum entropy originates from statistical modelling, and the reasoning behind is that maximum entropy models has the least assumptions about the unknowns while still matching the observations. 
% There have been a plethora of works casting the RL problems into inference problems and then borrow tools from Bayesian inference to solve issues such as partially observablility~\cite{2010furmston}, data inefficiency~\cite{2018abbas} and trade-off between exploration and exploitation~\cite{2017Brendan}. 

In MARL, there is more than one agent interacting with a stationary environment. In contrast with the single agent environment, an agent's reward not only depends on the current environment state and the agent's action but also on the actions of others. The existence of other agents increases the uncertainty in the environment. Therefore, the capability of reasoning about other agents' belief, private information, behavior, strategy, and other characteristics is crucial. A reasoning model can be used in many different ways, but the most common case is where an agent utilize its reasoning model to help its self decision making~\cite{brown:fp1951,DBLP:journals/corr/HeinrichS16,He2016,Raileanu2018,wen2018probabilistic,tian2018birdge}. In this work, we use the word ``opponent” when referring to another agent in the environment irrespective of the environment’s cooperative or adversarial nature.
% Based on this observation, many different algorithms have been proposed to improve the performance of an agent in MARL.

In our work, we reformulate the MARL problem into Bayesian inference and derive a multi-agent version of MEO, which we call the regularized opponent model with maximum entropy objective (ROMMEO). Optimizing this objective with respect to one agent's opponent model gives rise to a new perceptive on opponent modeling. We present
two off-policy RL algorithms for optimizing ROMMEO in MARL. ROMMEO-Q is applied in discrete action case with proof of convergence. For the complex and continuous action environment, we propose ROMMEO Actor-Critic (ROMMEO-AC), which approximates the former procedure and extend itself to continuous problems. We evaluate these two approaches on the matrix game and the differential game against strong baselines and show that our methods can outperform all the baselines in terms of the overall performance and speed of convergence.
\section{Method}
\subsection{Stochastic Games}
For an $n$-agent stochastic game~\cite{shapley1953stochastic}, we define a tuple $( \mathcal{S}, \mathcal{A}^1, \dots, \mathcal{A}^n, R^1, \dots, R^n, p, \mathcal{T}, \gamma )$, where $\mathcal{S}$ denotes the state space, 
$p$ is the distribution of the initial state, 
$\gamma$ is a discount factor, 
$\mathcal{A}^i$  and $R^i = R   ^i(s, a^i, a^{-i})$ are the action space and the reward function for agent $i \in \{1,\dots,n\}$ respectively. States are transitioned according to $\mathcal{T}:\mathcal{S}\times\mathcal{A}$, where $\mathcal{A}=\{\mathcal{A}^1,\cdots, \mathcal{A}^n\}$.
Agent $i$ chooses its action $a^{i} \in \mathcal{A}^{i}$ according to the policy $\pi^i_{\theta^i}(a^{i} | s)$  parameterized by $\theta^i$ conditioning on some given state $s \in \mathcal{S}$. Let us define the joint policy as the collection of all agents' policies $\pi_{\theta}$ with $\theta$ representing the joint parameter. It is convenient to interpret the joint policy from the perspective of agent $i$ such that $\pi_{\theta} = (\pi^{i}_{\theta^i}(a^{i}| s), \pi^{-i}_{\theta^{-i}}(a^{-i}| s))$, where $a^{-i} = (a^j)_{j \neq i}$, $\theta^{-i} = (\theta^j)_{j \neq i}$, and $\pi^{-i}_{\theta^{-i}}(a^{-i}| s)$ is a compact representation of the joint policy of all complementary agents of $i$. At each stage of the game, actions are taken simultaneously. Each agent is presumed to pursue the maximal cumulative reward,  expressed as
\begin{align}
\max~~\eta^i(\pi_{\theta}) = \mathbb { E } \left[ \sum _ { t = 1 } ^ { \infty } \gamma ^ { t } R^i( s_{t}, a^{i} _ { t }, a^{-i} _ { t  }  ) \right],
\label{marl_target}
\end{align}
with $(a^{i}_t, a^{-i}_t)$ sample from $(\pi^i_{\theta^i}, \pi^{-i}_{\theta^{-i}})$. In fully cooperative games, we assume there exists at least one joint policy $\pi_{\theta}$ such that all agents can achieve the maximal cumulative reward with this joint policy.
\subsection{A Variational Lower Bound for Multi-Agent Reinforcement Learning Problems}
% \begin{figure}[t!]
%     \centering
%     \includegraphics[scale=0.18]{}
%     \caption{Graphical Model of a Stochastic Game with Optimality Variables}
%     \vspace{-10pt}
%     \label{fig:gm_stocastic_game}
% \end{figure}
% \label{section:lowerbound}
We transform the control problem into an inference problem by introducing a binary random variable $o^{i}_t$ which serves as the indicator for “optimality” for each agent $i$ at each time step $t$. Recall that in single agent problem, reward $R(s_t, a_t)$ is bounded, but the achievement of the maximum reward given the action $a_t$ is unknown. Therefore, in the single-agent case, $o_t$ indicates the optimality of achieving the bounded maximum reward $r^*_t$. It thus can be regarded as a random variable and we have $P(o_t=1|s_t, a_t)\propto \exp(R(s_t, a_t))$. Intuitively, this formulation dictates that higher rewards reflect a higher likelihood of achieving optimality, i.e., the case when $o_{t}=1$. However, the definition of ``optimality" in the multi-agent case is subtlety different from the one in the single-agent situation. 

In cooperative multi-agent reinforcement learning (CMARL), to define agent $i$'s optimality, we first introduce the definition of optimum and optimal policy:
\begin{definition}
\label{def:optimum}
    In cooperative multi-agent reinforcement learning, optimum is a strategy profile $(\pi^{1*}, \hdots, \pi^{n*})$ such that: 
    % \vspace{-5pt}
    \begin{align}
        &\mathbb { E }_{s\sim p_s,a^{i*} _ { t }\sim \pi^{i*}, a^{-i*} _ { t  }\sim \pi^{-i*}} \left[ \sum _ { t = 1 } ^ { \infty } \gamma ^ { t } R^i( s_{t}, a^{i*} _ { t }, a^{-i*} _ { t  }  ) \right] \nonumber\\ 
        &\geq \mathbb { E }_{s\sim p_s,a^{i} _ { t }\sim \pi^{i}, a^{-i} _ { t  }\sim \pi^{-i}} \left[ \sum _ { t = 1 } ^ { \infty } \gamma ^ { t } R^i( s_{t}, a^{i} _ { t }, a^{-i} _ { t  }  ) \right] \\
        &\forall\pi\in\Pi, i\in{(1\hdots n)}\nonumber,
    \end{align}
    where $\pi=(\pi^i, \pi^{-i})$ and  Agent $i$'s optimal policy is $\pi^{i*}$. 
\end{definition}

In CMARL, a single agent's ``optimality" $o^i_t$ cannot imply that it obtains the maximum reward because the reward depends on the joint actions of all agents $(a^i, a^{-i})$. Therefore, we define $o^i_t=1$ only indicates that \emph{agent $i$'s policy at time step $t$ is optimal}. The posterior probability of agent $i$'s optimality given its action $a^i_t$ is the probability that the action is sampled from the optimal policy:
\begin{align}
    P(o^i_t|a^i_t) = P(a^i_t\sim\pi^{i*}|a^i_t)=\pi^{i*}(a^i_t).
\end{align}
We also assume that given other players playing optimally ($o^{-i}_t=1$), the higher the reward agent $i$ receives the higher the probability of agent $i$'s current policy is optimal ($o^i_t=1$):
\begin{align}
    P(o^i_t=1|o^{-i}_t=1, s_t,a^i_t, a^{-i}_t) \propto \exp(R(s_t, a^i_t, a^{-i}_t)).
\end{align}
The conditional probabilities of ``optimality" in both of CMARL and single-agent case have similar forms. However, it is worth mentioning that the ``optimality" in CMARL has a different interpretation to the one in single-agent case.

For cooperative games, if all agents play optimally, then agents can receive the maximum rewards, which is the optimum of the games. Therefore, given the fact that other agents are playing their optimal policies $o^{-i}=1$, the probability that agent $i$ also plays its optimal policy $P(o^i=1|o^{-i}=1)$ is the probability of obtaining the maximum reward from agent $i$'s perspective. Therefore, we define agent $i$'s objective as:
 \begin{equation}
     \max ~~ \mathcal{J} \overset{\Delta}{=} \log P(o^i_{1:T}=1|o^{-i}_{1:T}=1)
     \label{eq:exact_obj}
 \end{equation}
As we assume no knowledge of the optimal policies and the model of the environment, we treat them as latent variables. To optimize the observed evidence defined in Eq.~\ref{eq:exact_obj}, therefore, we use variational inference (VI) with an auxiliary distribution over these latent variables $q(a^i_{1:T}, a^{-i}_{1:T}, s_{1:T}|o^i_{1:T}=1, o^{-i}_{1:T}=1)$. Without loss of generality, we here derive the solution for agent $i$. We factorize $q(a^i_{1:T}, a^{-i}_{1:T}, s_{1:T}|o^i_{1:T}=1, o^{-i}_{1:T}=1)$ so as to capture agent $i$'s conditional policy on the current state and opponents actions, and beliefs regarding opponents actions. This way, agent $i$ will learn optimal policy, while also possessing the capability to model opponents actions $a^{-i}$. Using all modelling assumptions, we may factorize $q(a^i_{1:T}, a^{-i}_{1:T}, s_{1:T}|o^i_{1:T}=1, o^{-i}_{1:T}=1)$ as: 
\begin{align}
    &q(a^i_{1:T}, a^{-i}_{1:T}, s_{1:T}|o^i_{1:T}=1, o^{-i}_{1:T}=1) \nonumber\\
    &= P(s_1) \prod_{t}P(s_{t+1}|s_t, a_t)q(a^{i}_t|a^{-i}_t, s_t, o^{i}_t= o^{-i}_t=1)\nonumber\\
    &\times q(a^{-i}_t|s_t,o^{i}_t= o^{-i}_t=1) \nonumber \\
    &=P(s_1) \prod_{t}P(s_{t+1}|s_t, a_t) \pi(a^{i}_t|s_t, a^{-i}_t)
    \rho(a^{-i}_t|s_t), \nonumber
\end{align}
where we have assumed the same initial and states transitions as in the original model. With this factorization, we derive a lower bound on the likelihood of optimality of agent $i$:
 \begin{align}
     &\log P(o^i_{1:T}=1|o^{-i}_{1:T}=1) \nonumber \\
     &\geq \mathcal{J}(\pi, \rho) \overset{\Delta}{=}\sum_t\mathbb{E}_{(s_t, a^i_t, a^{-i}_t)\sim q}[R^i(s_t,a^i_t, a^{-i}_t) \nonumber \\
     &+H(\pi(a^i_t|s_t, a_t^{-i}))-D _ { \mathrm { KL } }(\rho(a^{-i}_t|s_t)||P(a^{-i}_t|s_t))]\label{eq:objetiveinstochasticgame} \\
     &=\sum_t\mathbb{E}_{s_t}[\underbrace{\mathbb{E}_{a^i_t\sim\pi,a^{-i}_t\sim\rho}[R^i(s_t,a^i_t, a^{-i}_t)+H(\pi(a^i_t|s_t, a_t^{-i}))}_{\text{MEO}}]\nonumber \\
     & -\underbrace{\mathbb{E}_{a^{-i}_t\sim\rho}[D _ { \mathrm { KL } }(\rho(a^{-i}_t|s_t)||P(a^{-i}_t|s_t))]}_{\text{Regularizer of }\rho}]. \label{eq:novelobjective}
 \end{align}
Written out in full, $\rho(a^{-i}_t|s_t, o^{-i}_t=1)$ is agent $i$'s opponent model estimating \emph{optimal} policies of its opponents, $\pi(a^i_t|s_t, a^{-i}_t, o^i_t=1, o^{-i}_t=1)$ is the agent $i$'s conditional policy at optimum ($o^i_t=o^{-i}_t=1$) and $P(a^{-i}_t|s_t, o^{-i}_t=1)$ is the prior of optimal policy of opponents. In our work, we set the prior $P(a^{-i}_t|s_t, o^{-i}_t=1)$ equal to the observed empirical distribution of opponents' actions given states. As we are only interested in the case where $(o^i_t=1, o^{-i}_t=1)$, we drop them in $\pi, \rho$ and $P(a^{-i}_t|s_t)$ here and thereafter. $H(\cdot)$ is the entropy function. Eq.~\ref{eq:objetiveinstochasticgame} is a variational lower bound of $\log P(o^i_{1:T}=1|o^{-i}_{1:T}=1)$ and the derivation is deferred to Appendix~\ref{appendix:lowerboundofoptimality}.
%  ~\footnote{Supplementary Material: \url{https://github.com/rommeoijcai2019/rommeo/blob/master/rommeo_appendix.pdf}} .
 
%  However, in multi-agent version, agent $i$'s reward $R^i$ and entropy of the conditional policy $H(\pi(a^i|s, a^-i))$ are under expectation with respect to not only its policy but also its opponent model $\rho$.
\subsection{The Learning of Opponent Model}
\label{sec:learningofopponent}

 We can further expand Eq.~\ref{eq:objetiveinstochasticgame} into Eq.~\ref{eq:novelobjective} and we find that it resembles the maximum entropy objective in single-agent reinforcement learning~\cite{kappen2005,Todorov2007,ziebart2008maximum,HaarnojaTAL17rldep}. We denote agent $i$'s expectation of reward $R(s_t,a^i_t, a^{-i}_t)$ plus entropy of the conditional policy $H(\pi(a^i|s, a^-i))$ as agent $i$'s \textit{maximum entropy objective (MEO)}. In the multi-agent version, however, it is worthy of noting that optimizing the MEO will lead to \emph{the optimization of $\rho$}. This can be counter-intuitive at first sight as opponent behaviour models are normally trained with only past state-action data $(s, a^{-i})$  to predict opponents' actions.

However, recall that $\rho(a^{-i}_t|s_t, o^{-i}_t=1)$ is modelling opponents' \emph{optimal} policies in our work. Given agent $i$'s policy $\pi^i$ being fixed, optimizing MEO with respect to $\rho$ updates agent $i$'s opponent model in the direction of the higher shared reward $R(s, a^i, a^{-i})$ and the more stochastic conditional policy $\pi^i(a^i|s, a^{-i})$, making it closer to the real optimal policies of the opponents. Without any regularization, at iteration $d$, agent $i$ can freely learn a new opponent model $\rho^i_{d+1}$ which is the closest to the optimal opponent policies $\pi^{-i*}$ from its perspective given $\pi^i_{d}(a^i|s, a^{-i})$. Next, agent $i$ can optimize the lower bound with respect to $\pi^i_{d+1}(a^i|s, a^{-i})$ given $\rho^i_{d+1}$. Then we have an EM-like iterative training and can show it monotonically increases the probability that the opponent model $\rho$ is optimal policies of the opponents. Then, by acting optimally to the converged opponent model $\rho^{i\infty}$, we can recover agent $i$'s optimal policy $\pi^{i*}$.
 
However, it is unrealistic to learn such an opponent model. As the opponents have no access to agent $i$'s conditional policy $\pi^i_d(a^i|s, a^{-i})$, the learning of its policy can be different from the one of agent $i$'s opponent model. Then the actual opponent policies $\pi^{-i}_{d+1}$ can be very different from agent $i$'s converged opponent model $\rho^{i\infty}$ learned in the above way given agent $i$'s conditional policy $\pi^i_d({a^i|s, a^{-i}})$. Therefore, acting optimally to an opponent model far from the real opponents' policies can lead to poor performance.
 
The last term in Eq.~\ref{eq:novelobjective} can prevent agent $i$ building an unrealistic opponent model. The Kullback-Leibler (KL) divergence between opponent model and a prior $D _ { \mathrm { KL } }(\rho(a^{-i}_t|s_t)||P(a^{-i}_t|s_t))$ can act as a regularizer of $\rho$. By setting the prior to the empirical distribution of opponent past behaviour, the KL divergence penalizes $\rho$ heavily if it deviates from the empirical distribution too much. As the objective in Eq.~\ref{eq:novelobjective} can be seen as a \textit{Maximum Entropy objective} for one agent's policy and opponent model with \textit{regularization on the opponent model}, we call this objective as \textit{Regularized Opponent Model with Maximum Entropy Objective (ROMMEO)}.  

% In this section, we redefine the meaning of optimality random variable $o^i$ in multi-agent setting and transform the learning of optimal policy into an inference problem. By using variation distribution, we derive a novel objective called ROMMEO in multi-agent problems, which is the lower bound of the likelihood of agent $i$ achieving optimum given that others are playing optimal policies. This lower bounds gives rise to a new perspective in opponent modelling. In the next section, we propose two RL algorithms for optimising the lower bound defined in Eq.~\ref{eq:objetiveinstochasticgame}.

\section{Multi-Agent Soft Actor Critic} 
To optimize the ROMMEO in Eq.~\ref{eq:novelobjective} derived in the previous section, we propose two off-policy algorithms. We first introduce an exact tabular Q-iteration method with proof of convergence. For practical implementation in a complex continuous environment, we then propose the ROMMEO actor critic ROMMEO-AC, which is an approximation to this procedure.

\subsection{Regularized Opponent Model with Maximum Entropy Objective Q-Iteration}

In this section, we derive a multi-agent version of Soft Q-iteration algorithm proposed in~\cite{HaarnojaTAL17rldep} and we name our algorithm as ROMMEO-Q. The derivation follows from a similar logic to~\cite{HaarnojaTAL17rldep}, but the extension of Soft Q-learning to MARL is still nontrivial. From this section, we slightly modify the objective in Eq.~\ref{eq:novelobjective} by adding a weighting factor $\alpha$ for the entropy term and the original objective can be recovered by setting $\alpha=1$.

We first define multi-agent soft Q-function and V-function respectively. Then we can show that the conditional policy and opponent model defined in Eq.~\ref{eq:optimalpifromsoftq} and \ref{eq:optimalrhofromsoftq} below are optimal solutions with respect to the objective defined in Eq.~\ref{eq:novelobjective}:

\begin{theorem}
    \label{theorem:optimalpiandrho}
    We define the soft state-action value function of agent $i$ as
    % \vspace{-10pt}
    \begin{equation}
    \small{
       \begin{aligned}
        &Q^{\pi^*,\rho^*}_{soft}(s_t, a^i_t, a^{-i}_t) = r_t+ \mathbb{E}_{(s_{t+l},a^{i}_{t+l}, a^{-i}_{t+l}, \hdots)\sim q}[\sum_{l=1}^{\infty}\gamma^{l}(r_{t+l}\\
        &  +\alpha H(\pi^*(a^{i}_{t+l}|a_{t+l}^{-i}, s_{t+l}))  -D _ { \mathrm { KL } }(\rho^*(a^{-i}_{t+l}|s_{t+l})||P(a_{t+l}^{-i}|s_{t+l}))],
    \end{aligned} }
    \end{equation}

    and soft state value function as
    \begin{equation}
    \resizebox{0.5\textwidth}{!}{$ 
        V^*(s)=\log\sum_{a^{-i}}P(a^{-i}|s)\left(\sum_{a^i}\exp(\frac{1}{\alpha}Q^*_{soft}(s, a^i, a^{-i}))\right)^{\alpha},
    $}
    \end{equation}
    % \begin{align}
    %     &V^*(s) \nonumber \\
    %     &=\log\sum_{a^{-i}}\exp(\log\sum_{a^i}\exp(Q^*_{soft}(s,a^i, a^{-i}))+logP(a^{-i}|s)) \\
    %     &=\log\sum_{a^{-i}}P(a^{-i}|s)\sum_{a^i}\exp(Q^*_{soft}(s, a^i, a^{-i})),
    % \end{align}
    Then the optimal conditional policy and opponent model for Eq.~\ref{eq:objetiveinstochasticgame} are
    \begin{flalign}
        \pi^*(a^i|s, a^{-i})= \frac{\exp(\frac{1}{\alpha}Q^{\pi^*,\rho^*}_{soft}(s, a^i,a^{-i}))}{\sum_{a^i}\exp(\frac{1}{\alpha}Q^{\pi^*,\rho^*}_{soft}(s, a^i,a^{-i}))} \label{eq:optimalpifromsoftq},
    \end{flalign}
    and 
    \begin{align}
        \rho^*(a^{-i}|s)= \frac{P(a^{-i}|s)\left(\sum_{a^i}\exp(\frac{1}{\alpha}Q^*_{soft}(s, a^i, a^{-i}))\right)^{\alpha}}{\exp(V^*(s))} \label{eq:optimalrhofromsoftq}.
    \end{align}
\end{theorem}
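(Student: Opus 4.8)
The plan is to treat the maximization of the ROMMEO objective in Eq.~\ref{eq:novelobjective} as a nested constrained variational problem and to verify that the proposed $\pi^*$ and $\rho^*$ are its stationary points, then invoke concavity of the objective to promote ``stationary'' to ``globally optimal''. First I would expose the soft Bellman structure: since the objective is a discounted sum of per-stage terms, the value of acting optimally from $(s_t,a^i_t,a^{-i}_t)$ onward is packaged into $Q^{\pi^*,\rho^*}_{soft}$, so that the state value should satisfy the one-step consistency $V^*(s)=\max_{\pi,\rho}\big\{\mathbb{E}_{a^{-i}\sim\rho}\mathbb{E}_{a^i\sim\pi}[Q^*_{soft}(s,a^i,a^{-i})+\alpha H(\pi)]-D_{\mathrm{KL}}(\rho\|P)\big\}$, where I also note that the outer expectation of the KL term in Eq.~\ref{eq:novelobjective} collapses to $D_{\mathrm{KL}}(\rho\|P)$ because it does not depend on the sampled $a^{-i}$. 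This reduces the infinite-horizon problem to a single-stage optimization over the distributions $\pi(\cdot|s,a^{-i})$ and $\rho(\cdot|s)$.

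Second, I would solve the inner maximization over $\pi$ for each fixed $a^{-i}$. Because $\mathbb{E}_{a^i\sim\pi}[Q^*_{soft}]+\alpha H(\pi)$ is strictly concave in $\pi$ and the only constraint is normalization $\sum_{a^i}\pi(a^i|s,a^{-i})=1$, introducing a Lagrange multiplier and setting the derivative with respect to each $\pi(a^i|s,a^{-i})$ to zero yields the Gibbs form $\pi^*(a^i|s,a^{-i})\propto\exp(\tfrac{1}{\alpha}Q^*_{soft})$, i.e. Eq.~\ref{eq:optimalpifromsoftq}, with the optimized inner value equal to $\alpha\log\sum_{a^i}\exp(\tfrac{1}{\alpha}Q^*_{soft}(s,a^i,a^{-i}))$.

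Third, substituting this optimized inner value, the outer problem over $\rho$ becomes the maximization of $\sum_{a^{-i}}\rho(a^{-i}|s)\,\alpha\log\sum_{a^i}\exp(\tfrac{1}{\alpha}Q^*_{soft})-D_{\mathrm{KL}}(\rho\|P)$ subject to $\sum_{a^{-i}}\rho(a^{-i}|s)=1$. This is again a concave, KL-regularized objective, and the stationarity condition gives $\rho^*(a^{-i}|s)\propto P(a^{-i}|s)\big(\sum_{a^i}\exp(\tfrac{1}{\alpha}Q^*_{soft})\big)^{\alpha}$. I would then observe that the normalizing constant of this expression is exactly $\exp(V^*(s))$ by the definition of $V^*$, which simultaneously recovers Eq.~\ref{eq:optimalrhofromsoftq} and confirms that the attained one-step optimum equals $V^*(s)$, thereby closing the consistency assumed in step one.

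The main obstacle is making the first step rigorous rather than merely intuitive: I must justify that the pointwise nested optimization (optimizing $\pi$ separately for every $a^{-i}$ and then $\rho$) genuinely solves the joint infinite-horizon objective. This calls for a dynamic-programming argument — an induction on a finite horizon together with a monotonicity/contraction property of the soft Bellman backup — to guarantee that $Q^*_{soft}$ and $V^*$ are mutually consistent and that the greedy policies they induce are optimal and not just critical points. Concavity of the entropy and negative-KL terms ensures each inner stationary point is a global maximum, so once the consistency of $Q^*_{soft}$ and $V^*$ is secured, the two closed-form solutions in Eq.~\ref{eq:optimalpifromsoftq} and Eq.~\ref{eq:optimalrhofromsoftq} follow directly.
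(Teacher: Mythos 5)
Your proposal is correct in substance and lands on the same two key closed-form computations as the paper, but it organizes the argument differently. The paper's proof (Appendix~\ref{appendix:policyimprovementandopponentmodelimprovement}) is a \emph{soft policy iteration} argument: it states a policy-improvement theorem and an opponent-model-improvement theorem, proves each via exactly the identity you derive by Lagrange multipliers --- namely $\mathbb{E}_{a^i\sim\pi}[Q]+\alpha H(\pi)=-\alpha D_{\mathrm{KL}}(\pi\|\Tilde{\pi})+\alpha\log\sum_{a^i}\exp(\tfrac{1}{\alpha}Q)$ and its analogue for $\rho$ --- then chains a telescoping sequence of inequalities through the Bellman expansion to conclude $Q^{\pi,\rho}_{soft}\le Q^{\Tilde{\pi},\Tilde{\rho}}_{soft}$, and finally argues that iterating the two greedy updates increases $Q$ monotonically so that the claimed forms hold at the fixed point (deferring regularity conditions to the soft Q-learning literature). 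Your route is instead value-iteration flavored: you posit the soft Bellman optimality equation for $V^*$, solve the nested one-step variational problem exactly (inner Gibbs policy, outer KL-regularized opponent model, normalizer $\exp(V^*(s))$), and defer the consistency of $Q^*_{soft}$ and $V^*$ to a finite-horizon induction plus contraction of the soft backup. What the paper's route buys is that monotone improvement sidesteps having to establish the optimality equation up front; what your route buys is that the closed forms in Eq.~\ref{eq:optimalpifromsoftq} and Eq.~\ref{eq:optimalrhofromsoftq} drop out in one pass, and the contraction you flag as the missing ingredient is in fact proved in the paper's Appendix~\ref{appendix:softbellmanequation} for Theorem~\ref{theorem:convergence} (under the symmetric single-optimum assumption with $\alpha=1$), so your gap can be closed by borrowing that lemma. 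One point to be careful about in your step one: the exchange of the trajectory-level maximization with the per-state, per-$a^{-i}$ pointwise maximization needs the observation that $\pi(\cdot|s,a^{-i})$ enters the objective only through nonnegative weights $\rho(a^{-i}|s)$, so nested optimization coincides with joint optimization; once that is said, your argument is sound.
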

\begin{proof}
% \vspace{-5pt}
    See Appendix~\ref{appendix:policyimprovementandopponentmodelimprovement}.
    % \vspace{-5pt}
\end{proof}

Following from Theorem \ref{theorem:optimalpiandrho}, we can find the optimal solution of Eq.~\ref{eq:novelobjective} by learning the soft multi-agent Q-function first and recover the optimal policy $\pi^*$ and opponent model $\rho^*$ by Equations \ref{eq:optimalpifromsoftq} and \ref{eq:optimalrhofromsoftq}. To learn the Q-function, we show that it satisfies a Bellman-like equation, which we name it as multi-agent soft Bellman equation:
\begin{theorem}
    \label{theorem:qiteration}
    We define the soft multi-agent Bellman equation for the soft state-action value function $Q^{\pi, \rho}_{soft}(s, a^i, a^{-i})$ of agent $i$ as
    % \begin{align}
    %     &Q^{\pi,\rho}_{soft}(s_t, a^i_t, a^{-i}_t) = r_t+ \nonumber\\
    %     &\mathbb{E}_{(s_{t+l},a^{i}_{t+l}, a^{-i}_{t+l}, \hdots)\sim q}[\sum_{l=1}^{\infty}\gamma^{l}(r_{t+l}+\alpha(H(\pi(a^{i}_{t+l}|a_{t+l}^{-i}, s_{t+l})) \nonumber \\
    %     &-D _ { \mathrm { KL } }(\rho(a^{-i}_{t+l}|s_{t+l})||P(a_{t+l}^{-i}|s_{t+l})))], 
    % \end{align}
    % and the soft state value function $V^{\pi, \rho}(s)$ of agent $i$ as
    % \begin{align}
    %     V^{\pi, \rho}(s)=\sum_{a^i, a^{-i}}Q^{\pi, \rho}\pi(a^i|a^{-i}, s)\rho(a^{-i}|s).
    % \end{align}
    % Then $Q^{\pi,\rho}_{soft}(s, a^i, a^{-i})$ satisfies the soft Bellman equation
    \begin{align}
        Q^{\pi,\rho}_{soft}(s, a^i, a^{-i}) = r_t + \gamma\mathbb{E}_{(s_{t+1})}[V_{soft}(s_{t+1})].
        \label{eq:multiagentsoftbellmaneq}
    \end{align}
\end{theorem}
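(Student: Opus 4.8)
The plan is to verify the recursion directly from the infinite-horizon definition of the soft state-action value function given in Theorem~\ref{theorem:optimalpiandrho}, by peeling off the first discounted summand and re-indexing the tail. Concretely, $Q^{\pi,\rho}_{soft}(s_t,a^i_t,a^{-i}_t)$ is defined as $r_t$ plus the expectation of $\sum_{l=1}^{\infty}\gamma^l\bigl(r_{t+l}+\alpha H(\pi(\cdot\,|a^{-i}_{t+l},s_{t+l}))-D_{\mathrm{KL}}(\rho(\cdot\,|s_{t+l})\|P(\cdot\,|s_{t+l}))\bigr)$ over trajectories sampled from $q$. First I would separate the $l=1$ term, factor out a single $\gamma$, and substitute $l'=l-1$ in the remaining sum $\sum_{l\geq2}$. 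The inner quantity $r_{t+1}+\mathbb{E}[\sum_{l'\geq1}\gamma^{l'}(\cdots)]$ then matches the definition of $Q^{\pi,\rho}_{soft}(s_{t+1},a^i_{t+1},a^{-i}_{t+1})$ verbatim, so the tail collapses to $\gamma\,\mathbb{E}_{s_{t+1}}\mathbb{E}_{a^{-i}_{t+1}\sim\rho,\,a^i_{t+1}\sim\pi}\bigl[Q^{\pi,\rho}_{soft}(s_{t+1},a^i_{t+1},a^{-i}_{t+1})+\alpha H(\pi)-D_{\mathrm{KL}}(\rho\|P)\bigr]$, whose inner expectation is the defining expression for $V_{soft}(s_{t+1})$.

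To make this concrete and recover the closed form of $V^*$ stated in Theorem~\ref{theorem:optimalpiandrho}, I would substitute the optimal $\pi^*,\rho^*$ and verify the next-step expectation collapses to it, working in two nested stages that mirror the two-level structure of the $V^*$ formula. Fixing $a^{-i}$ and taking the expectation over $a^i\sim\pi^*$, I substitute Eq.~\ref{eq:optimalpifromsoftq} in logarithmic form, $\log\pi^*(a^i|s,a^{-i})=\tfrac{1}{\alpha}Q^*_{soft}(s,a^i,a^{-i})-\log Z(s,a^{-i})$ with $Z(s,a^{-i})=\sum_{a^i}\exp(\tfrac{1}{\alpha}Q^*_{soft})$. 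The expected $Q$ then splits as $-\alpha H(\pi^*(\cdot|s,a^{-i}))+\alpha\log Z(s,a^{-i})$, so the entropy term cancels exactly and the inner expectation reduces to the scalar $\alpha\log Z(s,a^{-i})=\log\bigl(\sum_{a^i}\exp(\tfrac1\alpha Q^*_{soft})\bigr)^\alpha$.

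For the outer stage I take the expectation over $a^{-i}\sim\rho^*$ of $\alpha\log Z(s,a^{-i})-D_{\mathrm{KL}}(\rho^*\|P)$. Writing the KL as $\mathbb{E}_{a^{-i}\sim\rho^*}[\log(\rho^*/P)]$ and substituting Eq.~\ref{eq:optimalrhofromsoftq} in logarithmic form, $\log\rho^*(a^{-i}|s)=\log P(a^{-i}|s)+\alpha\log Z(s,a^{-i})-V^*(s)$, the $\alpha\log Z$ and $\log P$ contributions cancel pairwise and leave the integrand equal to the constant $V^*(s)$; since $V^*(s)$ does not depend on $a^{-i}$, the expectation returns $V^*(s)$ exactly, which is the claimed $V_{soft}(s)$. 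Assembling the three pieces yields $Q^*_{soft}(s_t,a^i_t,a^{-i}_t)=r_t+\gamma\,\mathbb{E}_{s_{t+1}}[V^*(s_{t+1})]$, as asserted.

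I expect the main obstacle to be bookkeeping the two-level cancellation rather than any single hard estimate: one must keep the $\alpha$ exponents consistent between the log-sum-exp over $a^i$ (which produces the power $(\cdot)^\alpha$) and the definition of $V^*$, and must be careful that the entropy and KL terms at the current step $t$ are deliberately \emph{excluded} from $Q_{soft}$ — the asymmetry between $Q$ and $V$ characteristic of soft maximum-entropy RL. A minor preliminary step is to justify the re-indexing of the infinite series, which follows from $\gamma<1$ together with boundedness of the reward and of the per-step entropy and KL terms, guaranteeing absolute convergence so that splitting and shifting the sum is valid.
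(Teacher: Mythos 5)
Your proposal is correct and follows essentially the same route as the paper: Appendix~\ref{appendix:softq} peels off the first discounted summand of the infinite-horizon definition to obtain the recursion with the entropy and KL terms, and Appendix~\ref{appendix:softbellmanequation} substitutes the optimal $\pi^*$ and $\rho^*$ from Eqs.~\ref{eq:optimalpifromsoftq}--\ref{eq:optimalrhofromsoftq} so that the bracketed next-step expectation collapses to the log-sum-exp form of $V^*$. Your write-up merely makes the two-level entropy/KL cancellation explicit where the paper states it tersely, and your closing remark that the identification with $V_{soft}$ requires the Boltzmann-form $\pi^*,\rho^*$ matches the paper's treatment.
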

\begin{proof}
    See Appendix~\ref{appendix:softbellmanequation}.
% \vspace{-5pt}
\end{proof}

With this Bellman equation defined above, we can derive a solution to Eq.~\ref{eq:multiagentsoftbellmaneq} with a fixed point iteration, which we call ROMMEO Q-iteration (ROMMEO-Q). Additionally, We can show that it can converge to the optimal $Q^*_{soft}$ and $V^*_{soft}$ with certain restrictions as stated in~\cite{wen2018probabilistic}:
\begin{theorem}
    \label{theorem:convergence}
    ROMMEO Q-iteration. In a symmetric game with only one global optimum, i.e. $\mathbb { E } _ { \pi ^ { * } } \left[ Q _ { t } ^ { i } ( s ) \right] \geq \mathbb { E } _ { \pi } \left[ Q _ { t } ^ { i } ( s ) \right]$, where $\pi^*$ is the optimal strategy profile. Let $Q_{soft}(\cdot, \cdot, \cdot)$ and $V_{soft}(\cdot)$ be bounded and assume $$\sum_{a^{-i}}P(a^{-i}|s)\left(\sum_{a^i}\exp(\frac{1}{\alpha}Q^*_{soft}(s, a^i, a^{-i}))\right)^{\alpha}<\infty$$ and that $ Q^*_{soft}<\infty $ exists. Then the fixed-point iteration
% \vspace{-5pt}
    \begin{equation}
    Q_{soft}(s_t, a^i_t, a^{-i}_t)\leftarrow r_t+\gamma\mathbb{E}_{(s_{t+1})}[V_{soft}(s_{t+1})], 
    \label{eq:qierationupdate}
    \end{equation}
    % \vspace{-10pt}
    where
    $
    V_{soft}(s_t)\leftarrow \log\sum_{a^{-i}_t}P(a^{-i}_t|s_t)
    \times\left(\sum_{a^i_t}\exp(\frac{1}{\alpha}Q_{soft}(s_t, a^i_t, a^{-i}_t))\right)^{\alpha}
    $
    % \label{eq:vierationupdate}
    ,
    % \begin{equation}
    % \resizebox{0.4\textwidth}{!}{$ 
    %     Q_{soft}(s_t, a^i_t, a^{-i}_t)\leftarrow r_t+\gamma\mathbb{E}_{(s_{t+1})}[V_{soft}(s_{t+1})], 
    %     $}
    %     \label{eq:qierationupdate}
    % \end{equation}
    % \vspace{-10pt}
    % \begin{equation}
    % \resizebox{0.4\textwidth}{!}{$ 
    %     V_{soft}(s_t)\leftarrow \log\sum_{a^{-i}_t}P(a^{-i}_t|s_t) \left(\sum_{a^i_t}\exp(\frac{1}{\alpha}Q_{soft}(s_t, a^i_t, a^{-i}_t))\right)^{\alpha},
    %     \label{eq:vierationupdate}
    %     $}
    % \end{equation}
    $\forall{s_t, a^i_t, a^{-i}_t},$ converges to $Q^*_{soft}$ and $V^*_{soft}$ respectively.
\end{theorem}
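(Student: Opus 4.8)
The plan is to recognize the update in Eq.~\ref{eq:qierationupdate} as a fixed-point iteration of a Bellman-type operator and to prove that this operator is a $\gamma$-contraction in the supremum norm, so that the Banach fixed-point theorem yields convergence to a unique limit, which the multi-agent soft Bellman equation (Theorem~\ref{theorem:qiteration}) then identifies as $Q^*_{soft}$. Concretely, I would define the operator $\mathcal{T}$ on the space of bounded functions $Q(s,a^i,a^{-i})$ by $(\mathcal{T}Q)(s,a^i,a^{-i}) = r_t + \gamma\,\mathbb{E}_{s_{t+1}}[V_Q(s_{t+1})]$, where $V_Q(s)$ is the soft value obtained from $Q$ through the nested log-sum-exp appearing in the statement. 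The boundedness hypotheses on $Q_{soft}$ and $V_{soft}$, together with the summability assumption on $\sum_{a^{-i}}P(a^{-i}|s)(\sum_{a^i}\exp(\tfrac{1}{\alpha}Q^*_{soft}))^{\alpha}$, guarantee that $\mathcal{T}$ maps this complete space into itself and that $V_Q$ is everywhere finite, so that the hypotheses of Banach's theorem are met.

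The core of the argument is a non-expansiveness lemma for the value map $Q\mapsto V_Q$. I would first treat the inner expression $U_Q(s,a^{-i})=\log\sum_{a^i}\exp(\tfrac{1}{\alpha}Q(s,a^i,a^{-i}))$ using the elementary shift property of log-sum-exp: if $\|Q_1-Q_2\|_\infty\le\epsilon$ then $\tfrac{1}{\alpha}Q_1\le\tfrac{1}{\alpha}Q_2+\tfrac{\epsilon}{\alpha}$ pointwise, and monotonicity of the exponential sum gives $|U_{Q_1}-U_{Q_2}|\le\epsilon/\alpha$. Writing the outer step as the weighted log-sum-exp $V_Q(s)=\log\sum_{a^{-i}}\exp(\alpha U_Q(s,a^{-i})+\log P(a^{-i}|s))$, the prior term cancels in the difference and the same shift argument yields $|V_{Q_1}(s)-V_{Q_2}(s)|\le\alpha\cdot(\epsilon/\alpha)=\epsilon$. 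The exact cancellation of the $1/\alpha$ produced by the inner soft-max against the exponent $\alpha$ of the outer power is the delicate point I expect to be the main obstacle, since a mismatched constant would destroy the contraction and is easy to get wrong when the prior weighting $P(a^{-i}|s)$ is carried along.

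With non-expansiveness of $V_Q$ in hand, the $\gamma$-contraction follows immediately: for any two Q-functions, $|(\mathcal{T}Q_1-\mathcal{T}Q_2)(s,a^i,a^{-i})| = \gamma\,|\mathbb{E}_{s_{t+1}}[V_{Q_1}-V_{Q_2}]| \le \gamma\,\mathbb{E}_{s_{t+1}}[\,|V_{Q_1}-V_{Q_2}|\,]\le\gamma\|Q_1-Q_2\|_\infty$, hence $\|\mathcal{T}Q_1-\mathcal{T}Q_2\|_\infty\le\gamma\|Q_1-Q_2\|_\infty$ with $\gamma<1$. Banach's theorem then gives a unique fixed point to which the iteration converges geometrically from any bounded initialization, and $V_{soft}$ converges through its continuous dependence on $Q_{soft}$. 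Finally I would invoke Theorem~\ref{theorem:qiteration} to show that $Q^*_{soft}$ is itself a fixed point of $\mathcal{T}$, so uniqueness forces the limit to equal $Q^*_{soft}$ and correspondingly $V^*_{soft}$. The symmetric-game, single-global-optimum hypothesis is what lets me treat the per-agent operator as genuinely converging to the shared optimum rather than to one of several misaligned equilibria, and I would invoke it precisely at this identification step to rule out ambiguity in the optimal strategy profile.
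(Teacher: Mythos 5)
Your proposal follows essentially the same route as the paper: define the soft Bellman operator $\mathcal{T}$, show the nested weighted log-sum-exp value map is non-expansive via the shift property of $\log\sum\exp$, conclude that $\mathcal{T}$ is a $\gamma$-contraction in the sup norm, and invoke Banach's fixed-point theorem together with the soft Bellman equation to identify the limit as $Q^*_{soft}$. If anything, your bookkeeping of the inner $1/\alpha$ against the outer power $\alpha$ is cleaner than the paper's, whose displayed computation picks up a factor $\alpha\varepsilon$ and only closes the contraction argument by setting $\alpha=1$.
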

\begin{proof}
% \vspace{-5pt}
    See Appendix~\ref{appendix:softbellmanequation}.
\end{proof}
% \vspace{-10pt}
\begin{figure*}[t!]
  \centering
  \subcaptionbox{\label{fig:lc_icg}
  }
  {
%   \vspace{-8pt}
  \includegraphics[height=8\baselineskip,width=0.3\textwidth]{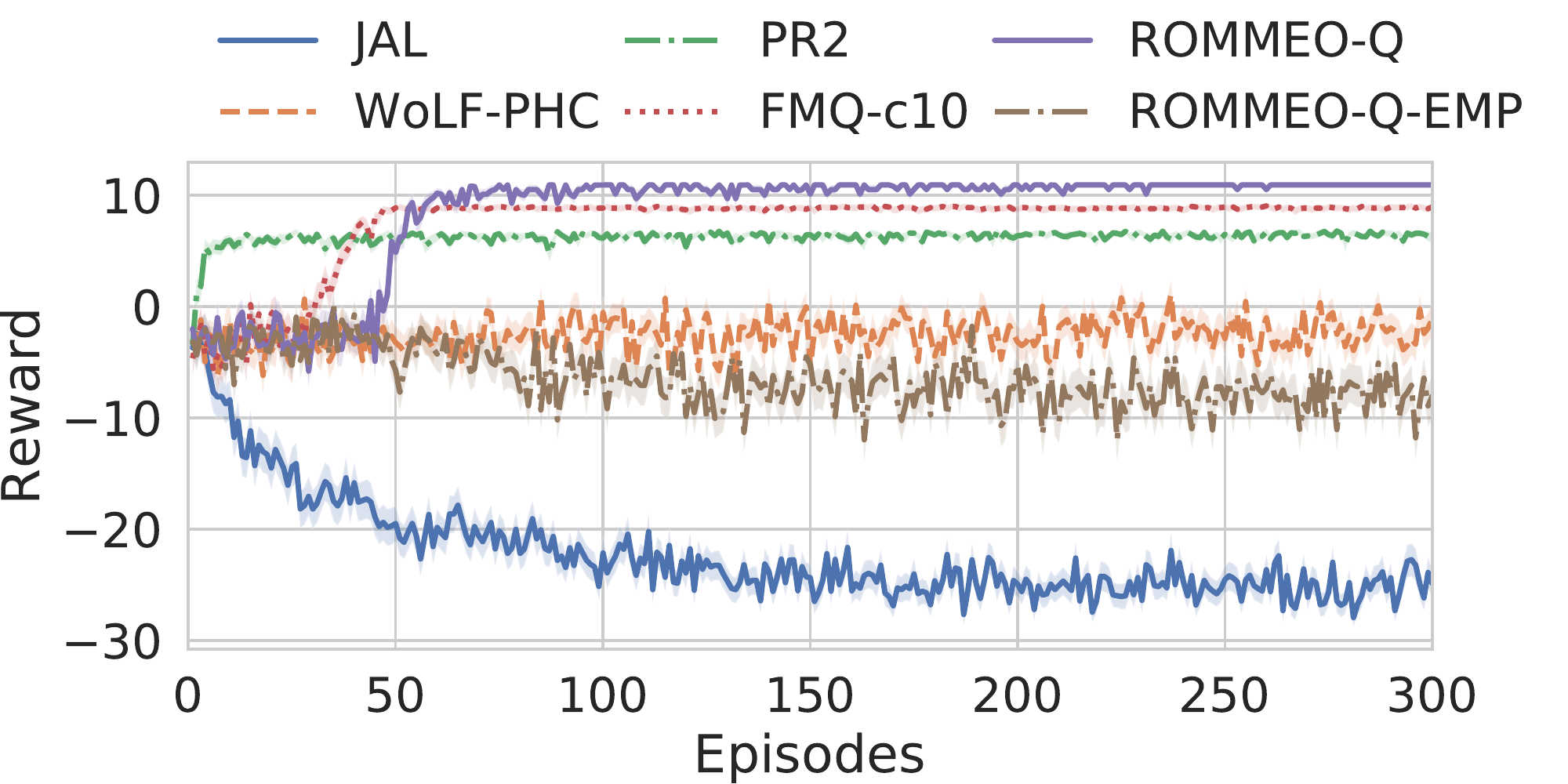}} \quad
  \subcaptionbox{
  \label{fig:pc_icg}
  }
  {
%   \vspace{-8pt}
  \includegraphics[height=8\baselineskip,width=0.3\textwidth]{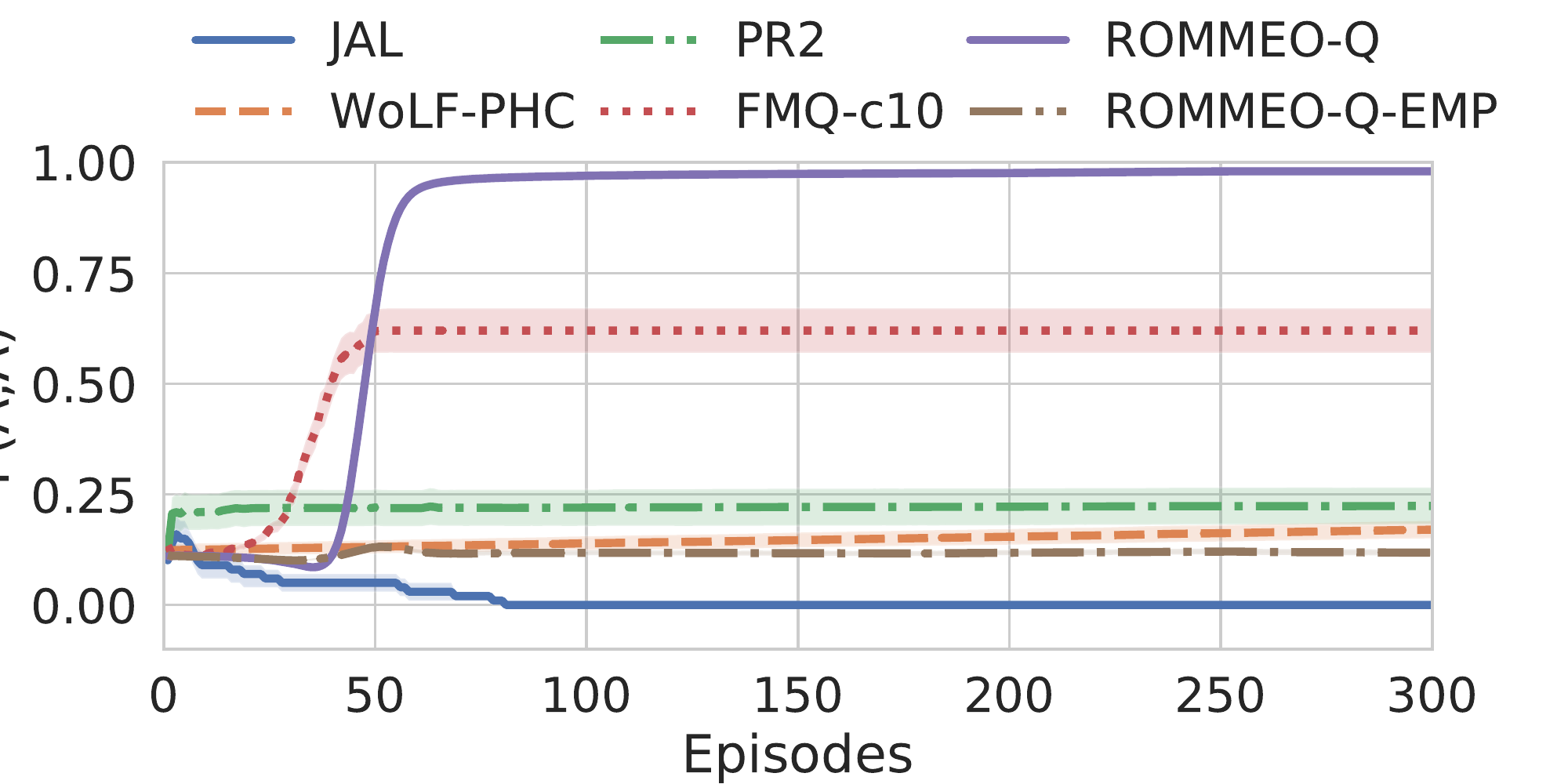}} \quad
  \subcaptionbox{
  \label{fig:rho_vs_emp}}
  {
%   \vspace{-8pt}
  \includegraphics[height=8\baselineskip,width=0.3\textwidth]{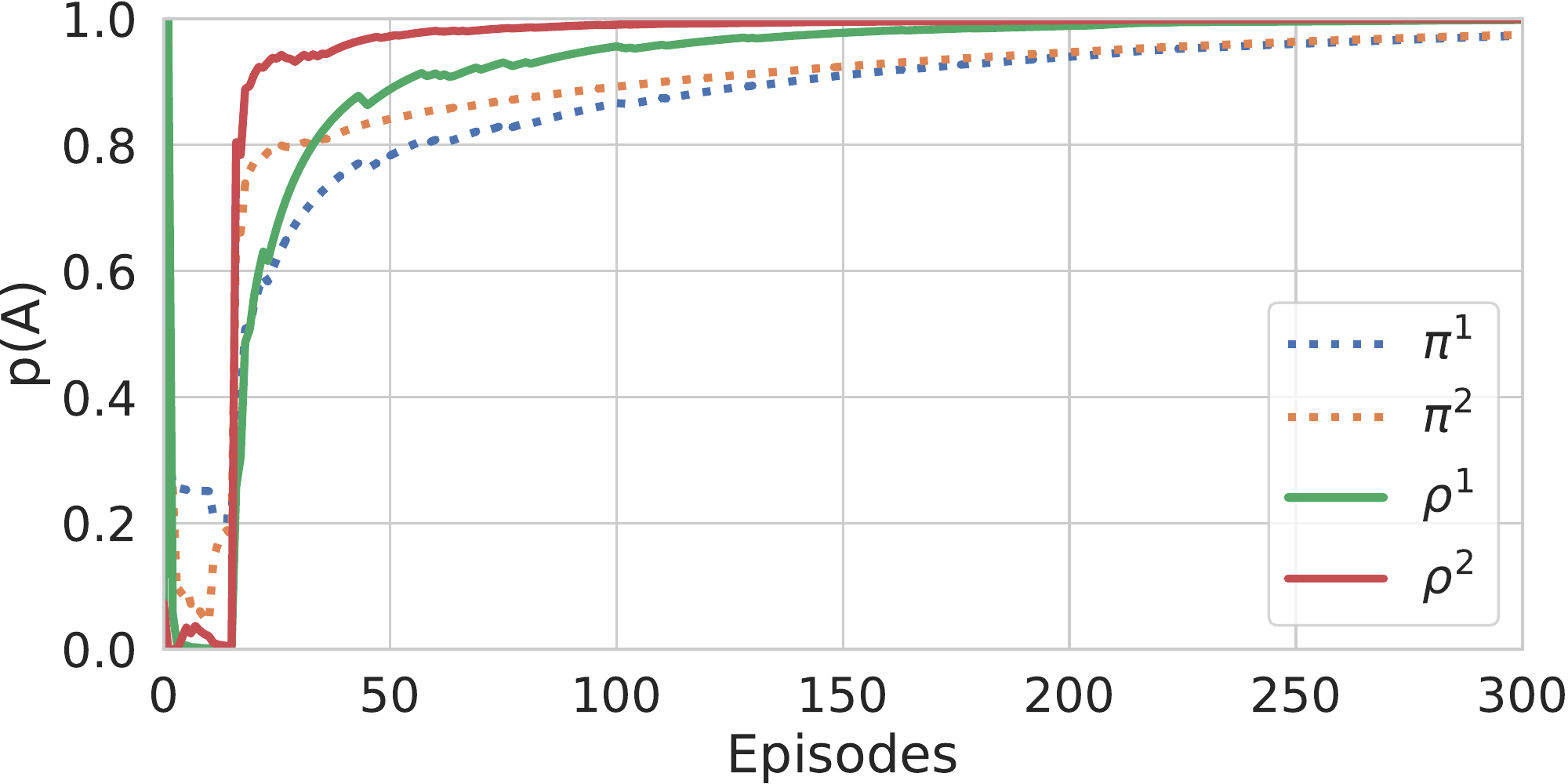}} \quad
%   \subcaptionbox{\label{fig:rommeo_path}}{\includegraphics[height=8\baselineskip,width=0.2\textwidth]{}} \quad
% \vspace{-10pt}
  \caption{(a): Learning curves of ROMMEO and baselines on ICG over 100 episodes. (b): Probability of convergence to the global optimum for ROMMEO and baselines on ICG over 100 episodes. The vertical axis is the joint probability of taking actions $A$ for both agents. (c): Probability of taking $A$ estimated by agent $i$'s opponent model $\rho^i$ and observed empirical frequency $P^i$ in one trail of training, ${i\in\{1, 2 \}}$.}
%   \vspace{-15pt}
  \label{fig:tabularq}
\end{figure*}

\subsection{Regularized Opponent Model with Maximum Entropy Objective Actor Critic}
The ROMMEO-Q assumes we have the model of the environment and is impractical to implement in high-dimensional continuous problems. To solve these problems, we propose the ROMMEO actor critic (ROMMEO-AC) which is a model-free method. We use neural networks (NNs) as function approximators for the conditional policy, opponent model and Q-function and learn these functions by stochastic gradient. We parameterize the Q-function, conditional policy and opponent model by $Q_{\omega}(s, a^i, a^{-i})$, $\pi_{\theta}(a^i_t|s_t, a^{-i}_t)$ and $\rho_{\phi}(a^{-i}_t|s_t)$ respectively.

Without access to the environment model, we first replace the Q-iteration with Q-learning. Therefore, we can train $\omega$ to minimize:
\begin{align}
    \mathcal{J}_{Q}(\omega) & =\mathbb{E}_{(s_t, a^i_t, a^{-i}_t)\sim \mathcal{D}}[\frac{1}{2}(Q_{\omega}(s_t, a^i_t, a^{-i}_t) \nonumber\\
    &- R(s_t, a^i_t, a^{-i}_t) - \gamma\mathbb{E}_{s_{t+1}\sim p_s}[\Bar{V}(s_{t+1})])^2],
    \label{eq:softac-omega-obj}
\end{align}
% with
% \begin{align}
%     \hat{Q}(s_t, a^i_t, a^{-i}_t) &= R(s_t, a^i_t, a^{-i}_t) + \gamma\mathbb{E}_{s_{t+1}\sim p_s}[\Bar{V}(s_{t+1})],
% \end{align}
with
\begin{align}
    \Bar{V}(s_{t+1})&=Q_{\Bar{\omega}}(s_{t+1}, a^{i}_{t+1}, \hat{a}^{-i}_{t+1})-\log\rho_{\phi}(\hat{a}^{-i}_{t+1}|s_{t+1})\nonumber\\
    &-\alpha \log\pi_{\theta}(a^{i}_{t+1}|s_{t+1}, \hat{a}^{-i}_{t+1})+logP(\hat{a}^{-i}_{t+1}|s_{t+1}),
    \label{eq:target_v_values}
\end{align}
% \vspace{-5pts}
where $Q_{\Bar{\omega}}$ are target functions for providing relatively stable target values. We use $\hat{a}^{-i}_t$ denoting the action sampled from agent $i$'s opponent model $\rho(a^{-i}_t|s_t)$ and it should be distinguished from $a^{-i}_t$ which is the real action taken by agent $i$'s opponent. Eq.~\ref{eq:target_v_values} can be derived from Eq.~\ref{eq:optimalpifromsoftq} and \ref{eq:optimalrhofromsoftq}.

To recover the optimal conditional policy and opponent model and avoid intractable inference steps defined in Eq.~\ref{eq:optimalpifromsoftq} and \ref{eq:optimalrhofromsoftq} in complex problems, we follow the method in~\cite{haarnoja2018softactorcritic} where $\theta$ and $\phi$ are trained to minimize the KL-divergence:
\begin{align}
    &\mathcal{J}_{\pi}(\theta)=\mathbb{E}_{s_t\sim D, a^{-i}_t\sim\rho} \nonumber\\
    &\left [  D _ { \mathrm { KL } }\left(\pi_{\theta} (\cdot|s_t, \hat{a}^{-i}_t)\bigg|\bigg|\frac{\exp(\frac{1}{\alpha}Q_{\omega}(s_t, \cdot, \hat{a}^{-i}_t))}{Z_{\omega}(s_t, \hat{a}^{-i}_t)}\right ) \right ],
\end{align}
% \vspace{-8pt}
\begin{align}
    &\mathcal{J}_{\rho}(\phi)=\mathbb{E}_{(s_t, a^i_t)\sim D}\nonumber\\
    &\left[D _ { \mathrm { KL } }\left ( \rho(\cdot|s_t)\bigg|\bigg|\frac{P(\cdot|s_t)\left(\frac{\exp(\frac{1}{\alpha}Q(s_t,a^i_t,\cdot))}{\pi_{\theta}(a^i_t|s_t, \cdot)}\right)^{\alpha}}{Z_{\omega}(s_t)} \right) \right].
\end{align}
By using the reparameterization trick: $\hat{a}^{-i}_t = g_{\phi}(\epsilon^{-i}_t;s_t)$ and $a^i_t = f_{\theta}(\epsilon^i_t;s_t, \hat{a}^{-i}_t)$, we can rewrite the objectives above as: 
\begin{align}
    \mathcal{J}_{\pi}(\theta)&=\mathbb{E}_{s_t\sim D, \epsilon^i_t\sim N, \hat{a}^{-i}_t\sim \rho}[\alpha \log\pi_{\theta}(f_{\theta}(\epsilon^i_t;s_t,\hat{a}^{-i}_t))\nonumber\\
    &-Q_{\omega}(s_t,f_{\theta}(\epsilon^i_t;s_t, \hat{a}^{-i}_t), \hat{a}^{-i}_t)],
    \label{eq:softac-theta-obj}
\end{align}
% \vspace{-15pt}
\begin{align}
    \mathcal{J}_{\rho}(\phi)&=\mathbb{E}_{(s_t, a_t)\sim D, \epsilon^{-i}_t\sim N}[\log\rho_{\phi}(g_{\phi}(\epsilon^{-i}_t;s_t)|s_t)\nonumber\\
    &-logP(\hat{a}^{-i}_t|s_t)-Q(s_t, a^i_t, g_{\phi}(\epsilon^{-i}_t;s_t)) \nonumber\\
    &+\alpha \log\pi_{\theta}(a^i_t|s_t, g_{\phi}(\epsilon^{-i}_t;s_t))].
    \label{eq:softac-phi-obj}
\end{align}

The gradient of Eq.~\ref{eq:softac-omega-obj}, \ref{eq:softac-theta-obj} and \ref{eq:softac-phi-obj} with respect to the corresponding parameters are listed as below:
\begin{align}
    \nabla_{\omega}\mathcal{J}_Q (\omega) 
    &=\nabla_{\omega}Q_{\omega}(s_t, a^{i}_t, a^{-i}_t)(Q_{\omega}(s_t, a^{i}_t, a^{-i}_t)\nonumber\\
    &-R(s_t, a^i_t, a^{-i}_t) - \gamma\Bar{V}(s_{t+1})),
\end{align}
% \vspace{-15pt}
\begin{align}
    \nabla_{\theta}\mathcal{J}_{\pi} (\theta)&=\nabla_{\theta}\alpha \log \pi_{\theta}(a^{i}_t|s_t,\hat{a}^{-i}_t)+\nabla_{\theta}f_{\theta}(\epsilon^i_t;s_t,\hat{a}^{-i}_t)\nonumber\\
    &(\nabla_{a^{i}_t}\alpha \log\pi_{\theta}(a^{i}_t|s_t,\hat{a}^{-i}_t) -\nabla_{a^{i}_t}Q_{\omega}(s_t,a^{i}_t,\hat{a}^{-i}_t)),
\end{align}
% \vspace{-15pt}
\begin{align}
    \nabla_{\phi}\mathcal{J}_{\rho} (\phi)&=\nabla_{\phi} \log \rho_{\phi}(\hat{a}^{-i}_t|s_t)+(\nabla_{\hat{a}^{-i}_t} \log\rho_{\phi^i}(\hat{a}^{-i}_t|s_t) \nonumber\\
    &- \nabla_{\hat{a}^{-i}_t} \log P(\hat{a}^{-i}_t|s_t) - \nabla_{\hat{a}^{-i}_t}Q_{\omega^i}(s_t,a_t^{i},\hat{a}^{-i}_t)\nonumber\\
    &+\nabla_{\hat{a}^{-i}_t}\alpha \log\pi_{\theta}(a^{i}|s_t, \hat{a}^{-i}_t)) \nabla_{\phi}g_{\phi}(\epsilon^{-i}_t;s_t).
\end{align}

We list the pseudo-code of ROMMEO-Q and ROMMEO-AC in Appendix \ref{appendix:algos}.

% This is because first we normally assume training agents do not have the model of the environment and second, in large action space, the update of the value function in Eq.~\ref{eq:vierationupdate} becomes intractable. 

% To solve the intractability in the soft value function, as done in~\cite{HaarnojaTAL17rldep}, we rewrite it in an Expectation form by important sampling: 
% \begin{align}
%     &V_{soft}(s)=\nonumber\\
%     &\log\mathbb{E}_{a^i\sim\pi,a^{-i}\sim\rho}[\frac{P(a^{-i}|s)\exp(Q_{soft}(s, a^i, a^{-i}))}{\pi(a^i|s, a^{-i})\rho(a^{-i}|s)}].
% \end{align}
% By this way, we can have an unbiased estimate of $V_{soft}(s)$ using Monte Carlo sampling. The full algorithm is summarised in Appendix~\ref{appendix:algos} Algorithm \ref{alg:multiagentsoftqiteration}. For other impractical issues such as sampling from policy in continuous problems, as such is not the main concern in our work, we refer to~\cite{HaarnojaTAL17rldep} for detailed discussion.
% \vspace{-7pt}
\section{Related Works}
%  Our work is more closely related to the ones where structured variational inference being employed to derive a lower bound for the likelihood of optimality~\cite{2010furmston,Neumann:2011,Fellows2018VIREL}.
In early works, the maximum entropy principle has been used in policy search in linear dynamics~\cite{Todorov2010linearly-solvable,Toussaint:2009,Levine2013} and path integral control in general dynamics~\cite{kappen2005,Theodorou:2010}. Recently, off-policy methods~\cite{HaarnojaTAL17rldep,Schulman2017equivalence,Nachum2017gap} have been proposed to improve the sample efficiency in optimizing MEO. To avoid complex sampling procedure, training a policy in supervised fashion is employed in~\cite{haarnoja2018softactorcritic}. Our work is closely related to this series of recent works because ROMMEO is an extension of MEO to MARL. 

A few related works to ours have been conducted in multi-agent soft Q-learning~\cite{wen2018probabilistic,wei2018multiagent,Grau2018balancing}, where variants of soft Q-learning are applied for solving different problems in MARL. However, unlike previous works, we do not take the soft Q-learning as given and apply it to MARL problems with modifications. In our work, we first establish a novel objective ROMMEO and ROMMEO-Q is only an off-policy method we derive with complete convergence proof, which can optimize the  objective. There are other ways of optimizing ROMMEO, for example, the on-policy gradient-based methods, but they are not included in the paper. 
% ROMMEO-Q resembles soft Q-learning because ROMMEO is an extension to MEO in MARL but not because it is a modification of soft Q-learning applying on MARL problems.

There has been substantial progress in combining RL with probabilistic inference. However, most of the existing works focus on the single-agent case. The literature of Bayesian methods in MARL is limited. Among these are methods performing on cooperative games with prior knowledge on distributions of the game model and the possible strategies of others~\cite{Chalkiadakis:2003} or policy parameters and possible roles of other agents~\cite{Wilson:2010}. In our work, we assume very limited prior knowledge of the environment model, optimal policy, opponents or the observations during the play. In addition, our algorithms are fully \textit{decentralized at training and execution}, which is more challenging than problems from the centralized training~\cite{Foerster2017CMOA,lowe2017MADDPG,Rashid2018qmix}.

In our work, we give a new definition of optimality in CMARL and derive a novel objective ROMMEO. We provide two off-policy RL algorithms for optimizing ROMMEO and the exact version comes with convergence proof. In addition, we provide a natural perspective on opponent modeling in coordination problems: biasing one’s opponent model towards the optimum from its perspective but regularizing it with the empirical distribution of opponent’s real behavior.
% In MARL, the capability of reasoning about opponents' private information, strategies and other characteristics is crucial. Normally, this reasoning is built on observation of modelled opponents' past actions or other observable relevant information about the opponents and is used to predict opponents' \emph{current} policies or hidden information~\cite{He2016,Raileanu2018,tian2018birdge}. In machine learning, we define this process of reasoning as opponent modelling. In our work, however, we show that it is beneficial to reason about the optimal policies of opponents and act optimally to it as long as the opponent model is not too far from opponents current policies. We show that one agent's MEO is effectively a lower bound of log likelihood that its opponent model is optimal policies of the oppoents in cooperative games. 

% \vspace{-10pt}
\begin{figure*}[t!]
  \centering
   \subcaptionbox{
   \label{fig:diff_reward_surface}
   }
   {
%   \vspace{-6pt}
   \includegraphics[height=8\baselineskip,width=0.25\textwidth]{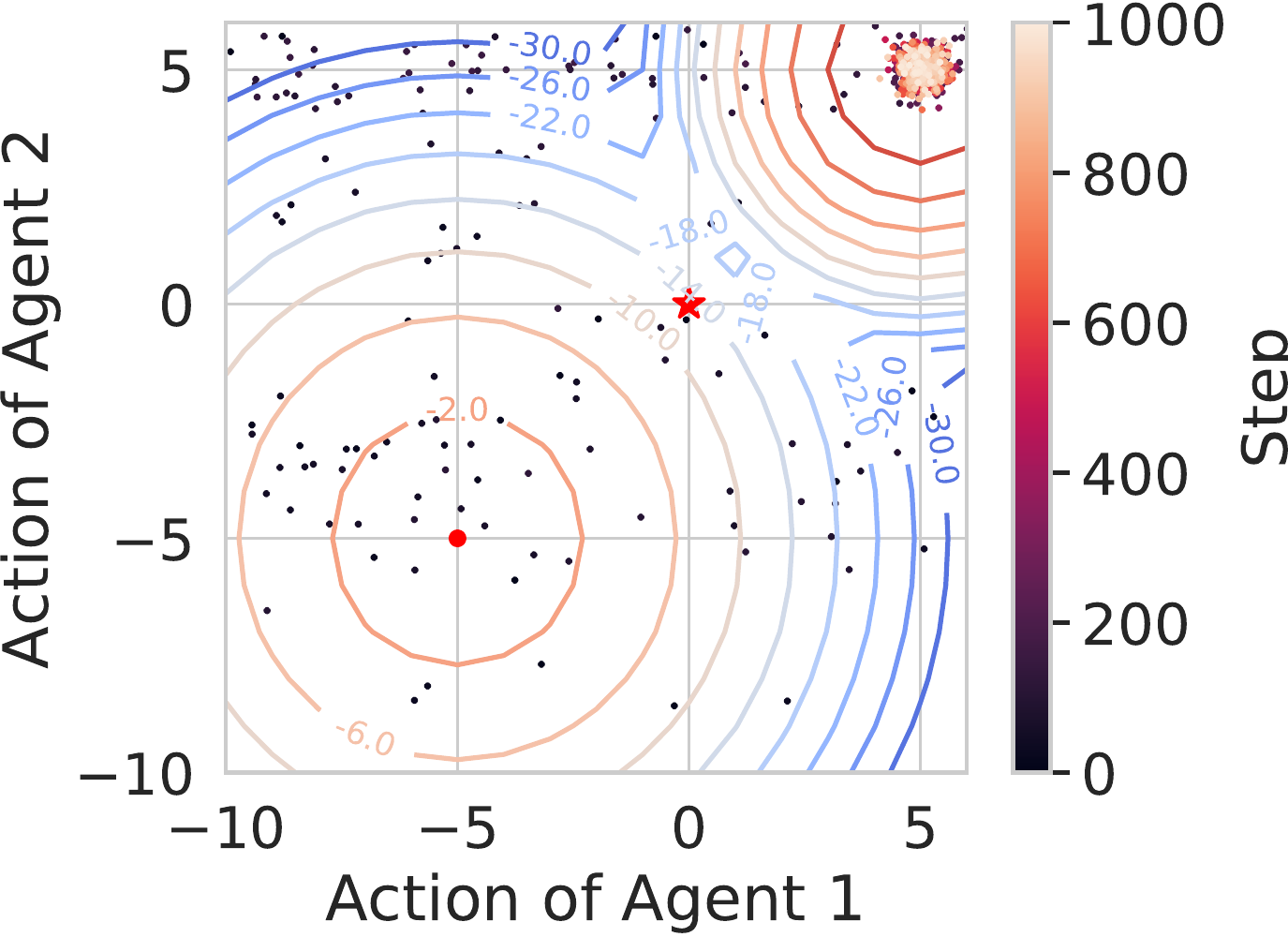}} \quad
  \subcaptionbox{
  \label{fig:diff_learning_curive}
  }
  {
%   \vspace{-6pt}
  \includegraphics[height=8\baselineskip,width=0.35\textwidth]{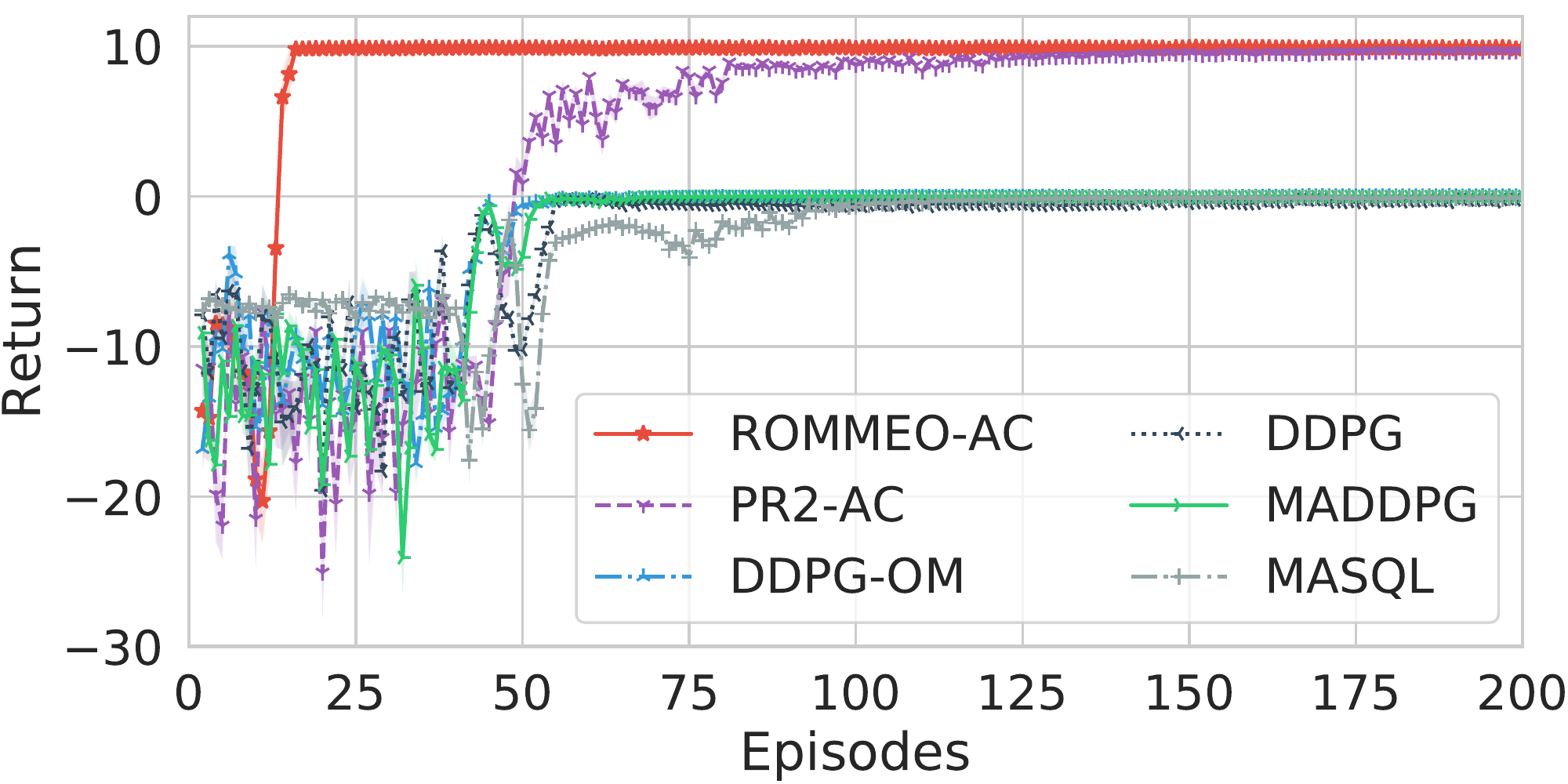}} \quad
  \subcaptionbox{
  \label{fig:diff_policies}
  }
  {
%   \vspace{-6pt}
  \includegraphics[height=8\baselineskip,width=0.25\textwidth]{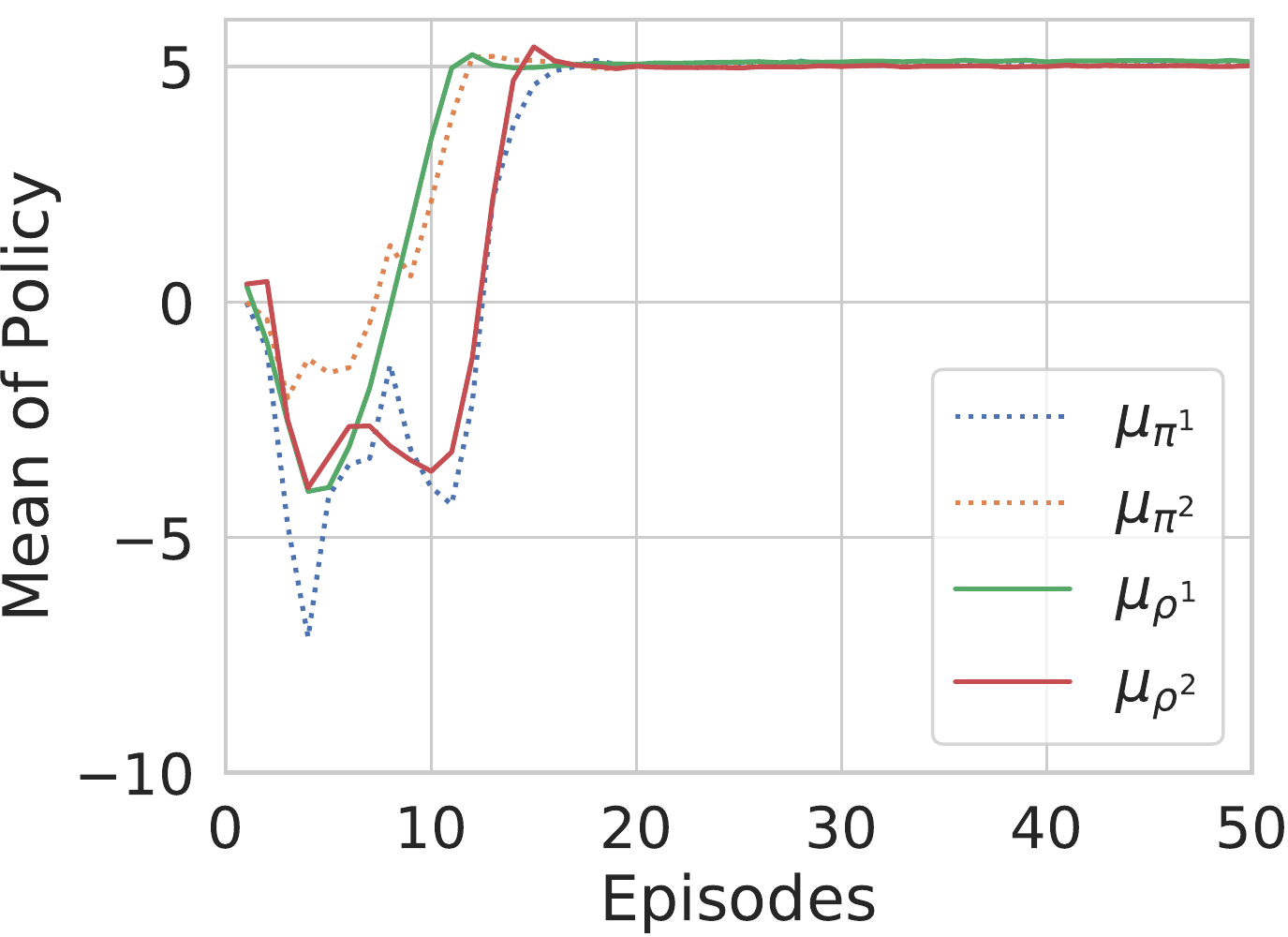}} 
%   \vspace{-10pt}
  \caption{Experiment on Max of Two Quadratic Game. (a) Reward surface and learning path of agents. Scattered points are actions taken at each step; (b)  Learning curve of ROMMEO and baselines. (c) Mean of agents' policies $\pi$ and opponent models $\rho$.}
%   \vspace{-15pt}
  \label{fig:diff}
\end{figure*}

\section{Experiments}
\subsection{Iterated Matrix Games}
% \begin{table}[H]
% \centering
% \begin{tabular}{c|c|c|c}
% \hline
%  & A & B & C \\ \hline
% A & ( 11, 11) & (-30, -30) & (0 , 0) \\ \hline
% B & (-30, -30) & (7, 7) & (6,  6) \\ \hline
% C & (0, 0) & (0, 0) & \textit{(5, 5)} \\ \hline
% \end{tabular}\quad
% \caption{Payoff matrix of Iterated Climbing Game}
% \label{table:iteratedgames}
% \end{table}
We first present the proof-of-principle result of ROMMEO-Q\footnote{The experiment code and appendix are available at \href{https://github.com/rommeoijcai2019/rommeo}{https://github.com/rommeoijcai2019/rommeo}.} on iterated matrix games where players need to cooperate to achieve the shared maximum reward. To this end, we study the iterated climbing games (ICG) which is a classic purely cooperative two-player stateless iterated matrix games. 

Climbing game (CG) is a fully cooperative game proposed in~\cite{Claus:1998:DRL:295240.295800} whose payoff matrix is summarized as follows:
$
\small{
R = 
\begin{array}{cc}
\begin{matrix}
\end{matrix} & 
\begin{array}{rrr}
A~~~& ~~~~~~~~~~~~~~B~~~~~~~~ &~~~~~~~C~~~~ 
\end{array} \\
\begin{matrix}
A \\
B \\
C \\
\end{matrix}
 & 
 \left[
\begin{array}{lll}
{ (11 , 11)} & { (-30, -30) } & {(0,0)} \\ { (-30, -30) } & {(7, 7) } & { (6, 6) } \\ { (0, 0) } & {(0, 0) } & { (5, 3) }
\end{array}
\right]
\end{array}
}
$.
It is a challenging benchmark because of the difficulty of convergence to its global optimum. There are two Nash equilibrium $(A, A)$ and $(B, B)$ but one global optimal $(A, A)$. The punishment of miscoordination by choosing a certain action increases in the order of $C\rightarrow B\rightarrow A$. The safest action is $C$ and the miscoordination punishment is the most severe for $A$. Therefore it is very difficult for agents to converge to the global optimum in ICG.

% Table \ref{table:iteratedgames} shows the payoff matrix of coordination game (CG) and stag hunt (SH) in one round. There are two nash equilibrium (NE) in SH which are (stag, stag) and (hare, hare), but only the former is the global optimal. The payoff matrix is not given to the training agents, however, in simple environment such as ISH, an agent's Q-function can be learned fairly quickly and accurate. 

% \begin{figure}
%     \centering
%     \includegraphics[scale=0.3]{figures/rommeo_path_for_stag_hunt.pdf}
%     \caption{An example of ROMMEO escapes from local NE (hare, hare), where ROMMEO starts from a point very close to it. Each point on this plot shows probability of choosing stag for each player.}
%     \label{fig:rommeo_path}
% \end{figure}

We compare our method to a series of strong baselines in MARL, including Joint Action Learner (JAL)~\cite{Claus:1998:DRL:295240.295800}, WoLF Policy Hillclimbing (WoLF-PHC)~\cite{Bowling:2001:RCL:1642194.1642231}, Frequency Maximum Q (FMQ)~\cite{Kapetanakis:2002:RLC:777092.777145} and Probabilistic Recursive Reasoning (PR2)~\cite{wen2018probabilistic}. ROMMEO-Q-EMP is an ablation study to evaluate the effectiveness of our proposed opponent model learning process, where we replace our opponent model with empirical frequency. Fig. \ref{fig:lc_icg} shows the learning curves on ICG for different algorithms. The difference of rewards between ROMMEO-Q and FMQ-c10 may seem small because of the small reward margin between the global optimum and the local one. However, ROMMEO-Q actually outperforms all baselines significantly in terms of converging to the global optimum, which is shown in Fig. \ref{fig:pc_icg}. To further analyze the opponent modeling described in Sec. \ref{sec:learningofopponent}, we visualize the probability of agent $-i$ taking the optimal action $A$ estimated by agent $i$'s opponent model $\rho^i$ and its true policy $\pi^{-i}$ in Fig. \ref{fig:rho_vs_emp}. Agent $i$'s opponent model ``thinks ahead" of agent $-i$ and converges to agent $-i$'s optimal policy before agent $-i$ itself converges to the optimal policy. This helps agent $i$ to respond to its opponent model optimally by choosing action $A$, which in turn leads to the improvement of agent $-i$'s opponent model and policy. Therefore, the game converges to the global optimum. To note, the big drop of $P(A)$ for both policies and opponent models at the beginning of the training comes from the severe punishment of miscoordination associated with action $A$.
% \vspace{-5pt}
\subsection{Differential Games}
We adopt the differential Max of Two Quadratic Game~\cite{wei2018multiagent} for continuous case. The agents have continuous action space of $[-10, 10]$. Each agent's reward depends on the joint action following the equations:  $r ^ 1 \left( a ^ { 1 } , a ^ { 2 } \right)  =   r ^ 2 \left( a ^ { 1 } , a ^ { 2 } \right) = \max \left( f _ { 1 } , f _ { 2 } \right),$ where
%\begin{align}
$
f _ { 1 } = 0.8 \times [ -( \frac { a ^ { 1 } + 5 } { 3 } ) ^ { 2 } - ( \frac { a ^ { 2 } + 5 } { 3  } ) ^ { 2 } ], 
f _ { 2 } = 1.0 \times [ - ( \frac { a ^ { 1 } - 5 } { 1 } ) ^ { 2 } - ( \frac { a ^ { 2 } - 5 } { 1 } ) ^ { 2 } ] + 10   
%r ^ 1 \left( a ^ { 1 } , a ^ { 2 } \right)  =   r& ^ 2 \left( a ^ { 1 } , a ^ { 2 } \right) = \max \left( f _ { 1 } , f _ { 2 } \right).
$. 
We compare the algorithm with a series of baselines including PR2~\cite{wen2018probabilistic}, MASQL ~\cite{wei2018multiagent,Grau2018balancing}, MADDPG~\cite{lowe2017MADDPG} and independent learner via DDPG~\cite{lillicrap2015continuous}.
To compare against traditional opponent modeling methods, similar to  ~\cite{rabinowitz2018machine,He2016}, 
we implement an additional baseline of DDPG with an opponent module that is trained online with supervision in order to capture the latest opponent behaviors, called DDPG-OM. We trained all agents for 200 episodes with 25 steps per episode.

% This challenging task to most of continuous action based RL algorithms because gradient tends to leads to the sub-optimal point.
This is a challenging task to most continuous gradient based RL algorithms because gradient update tends to direct the training agent to the sub-optimal point.
The reward surface is provided in Fig. \ref{fig:diff_reward_surface} ; there is a local maximum $0$ at $(-5,-5)$ and a global maximum $10$ at $(5, 5)$, with a deep valley staying in the middle. If the agents' policies are initialized to $(0,0)$ (the red starred point) that lies within the basin of the left local maximum, the gradient-based methods would tend to fail to find the global maximum equilibrium point due to the valley blocking the upper right area. 

A learning path of ROMMEO-AC is summarized in Fig. \ref{fig:diff_reward_surface} and the solid bright circle on the right corner implies the convergence to the global optimum. The learning curve is presented in Fig. \ref{fig:diff_learning_curive}, ROMMEO-AC shows the capability of converging to the global optimum in a limited amount of steps, while most of the baselines can only reach the sub-optimal point. PR2-AC can also achieve the global optimum but requires many more steps to explore and learn. Additionally, fine tuning on the exploration noise or separate exploration stage is required for deterministic RL methods (MADDPG, DDPG, DDPG-OM, PR2-AC), and the learning outcomes of energy-based RL method (MASQL) is extremely sensitive to the annealing scheme for the temperature. In contrast, ROMMEO-AC employs a stochastic policy and controls the exploration level by the weighting factor $\alpha$. It does not need a separate exploration stage at the beginning of the training or a delicately designed annealing scheme for $\alpha$.

Furthermore, we analyze the learning path of policy $\pi$ and modeled opponent policy $\rho$ during the training, the results are shown in Fig.~\ref{fig:diff_policies}. The red and orange lines are mean of modeled opponent policy $\rho$, which always learn to approach the optimal ahead of the policy $\pi$ (in dashed blue and green lines). This helps the agents to establish the trust and converge to the optimum quickly, which further justifies the effectiveness and benefits of conducting a regularized opponent model proposed in Sec.~\ref{sec:learningofopponent}.
% \vspace{-5pt}
\section{Conclusion}
In this paper, we use Bayesian inference to formulate MARL problem and derive a novel objective ROMMEO which gives rise to a new perspective on opponent modeling. We design an off-policy algorithm ROMMEO-Q with complete convergence proof for optimizing ROMMEO. For better generality, we also propose ROMMEO-AC, an actor critic algorithm powered by NNs to solve complex and continuous problems. We give an insightful analysis of the effect of the new learning process of the opponent modeling on agent's performance in MARL. We evaluate our methods on the challenging matrix game and differential game and show that they can outperform a series of strong base lines. It is worthy of noting that Theorems \ref{theorem:optimalpiandrho},\ref{theorem:qiteration} and \ref{theorem:convergence} only guarantees the convergence to optimal solutions with respect to ROMMEO objective but not the optimum in the game. The achievement of the optimum in the game also relies on the opponent learning algorithm. In our work, we demonstrate that ROMMEO-Q/AC's convergence to the optimum of the game in self-play setting. The convergence to optimum of games in non-self-play settings will be studied in our future work.

% \pagebreak
\bibliographystyle{named}
\begingroup
\setstretch{.88}
{\small
\bibliography{main}
}
\endgroup
\onecolumn

\appendix

\section{Algorithms}
\label{appendix:algos}
\begin{algorithm}[H]
   \caption{Multi-agent Soft Q-learning}
   \label{alg:multiagentsoftqiteration}
\begin{algorithmic}
    \STATE {\bfseries Result:} policy $\pi^i$, opponent model $\rho^i$
    \STATE
    
    \STATE{\bfseries Initialization}:
    
    \STATE Initialize replay buffer $\mathcal{M}$ to capacity $M$.
    
    \STATE Initialize $Q_{\omega^i}(s, a^i, a^{-i})$ with random parameters $\omega^i$, $P(a^{-i}|s)$ arbitrarily, set $\gamma$ as the discount factor.
    
    \STATE Initialize target $Q_{\Bar{\omega}^i}(s, a^i, a^{-i})$ with random parameters $\Bar{\omega}^i$, set $C$ the target parameters update interval.
    \STATE
    
    \WHILE{not converge}
    \STATE
        \STATE {\bfseries Collect experience}
        \STATE
            
            \STATE For the current state $s_t$ compute the opponent model $\rho^i(a^{-i}_t|s_t)$ and conditional policy $\pi^i(a^i_t|s_t, a^{-i}_t)$ respectively from:
            $$\rho^i(a^{-i}_t|s_t)\propto P(a^{-i}_t|s_t)\left(\sum_{a^i_t}\exp (\frac{1}{\alpha} Q_{\omega^i}(s_t, a^i_t, a^{-i}_t))\right)^\alpha,$$
            
            $$\pi^i(a^i_t|s_t, \Hat{a}^{-i}_t)\propto \exp (\frac{1}{\alpha} Q_{\omega^i}(s_t, a^i_t,\Hat{a}^{-i}_t)).$$
            
            \STATE Compute the marginal policy $\pi^i(a^i_t|s_t)$ and sample an action from it:
            $$a^i_t\sim \pi^i(a^i_t|s_t)=\sum_{a^{-i}}\pi^i(a^i_t|s_t, a^{-i}_t)\rho(a^{-i}_t|s_t).$$
            
            \STATE Observe next state $s_{t+1}$, opponent action $a^{-i}_t$ and reward $r^i_t$, save the new experience in the reply buffer:
            $$\mathcal{M}\leftarrow\mathcal{M}\cup \{(s_t, a^i_t, a^{-i}_t, s_{t+1}, r^i_t)\}.$$
            
            \STATE Update the prior from the replay buffer:
            $$P(a^{-i}_t|s_t)=\frac{\sum_{m=1}^{|\mathcal{M}|}\mathbb{I}(s=s_t, a^{-i}=a^{-i}_t)}{\sum_{m=1}^{|\mathcal{M}|}\mathbb{I}(s=s_t)} \, \forall{s_t, a^{-i}_t \in \mathcal{M}}.$$
            
        \STATE{\bfseries Sample a mini-batch from the replay buffer}:
        \STATE
        $$\{s^{(n)}_t, a^{i, {(n)}}_t, a^{-i,{(n)}}_t, s^{(n)}_{t+1}, r^{(n)}_{t}\}^{N}_{n=1} \sim \mathcal{M}.$$
        
        \STATE{\bfseries Update $Q_{\omega^i}(s, a^i, a^{-i})$}:
        \STATE
        
        \FOR{each tuple $(s^{(n)}_t, a^{i, {(n)}}_t, a^{-i,{(n)}}_t, s^{(n)}_{t+1}, r^{(n)}_{t})$ }
        \STATE
        
        \STATE Sample $\{a^{-i, (n,k)}\}^K_{k=1} \sim \rho, \, \{a^{i, (n, k)}\}^K_{k=1} \sim \pi$.
        
        \STATE Compute empirical $\Bar{V}^i(s^{(n)}_{t+1})$ as:
        $$\Bar{V}^i(s^{(n)}_{t+1}) = \log \left(\frac{1}{K}\sum^K_{k=1}\frac{\left(P^{\frac{1}{\alpha}}(a^{-i, (n,k)}|s^{(n)}_{t+1})\exp (\frac{1}{\alpha}Q_{\Bar{\omega}^i}(s^{(n)}_{t+1}, a^{i, (n, k)},a^{-i, (n,k)}))\right)^\alpha}{\pi(a^{i,(n,k)}|s^{(n)}_{t+1}, a^{-i, (n, k)})\rho(a^{-i,(n,k)}|s^{(n)}_{t+1})}\right).$$
        
        \STATE Set $$ y^{(n)} = \left\{ \begin{array} { l l } { r^{(n)}_t } & { \text { for terminal } s^{(n)}_{t+1}} \\ { r^{(n)}_t + \gamma \Bar{V}^i(s^{(n)}_{t+1}) } & { \text { for non-terminal } s^{(n)}_{t+1} } \end{array} \right.$$
        
        \STATE Perform gradient descent step on $(y^{(n)}-Q_{\omega^i}(s^{(n)}_{t+1}, a^{i, (n)},a^{-i, (n)}))^2$ with respect to parameters $\omega^i$
        
        \STATE Every $C$ gradient descent steps, reset target parameters:
        $$\Bar{\omega}^i\leftarrow\omega$$
        \ENDFOR
    \ENDWHILE
    \STATE{\bfseries Compute converged $\pi^i$ and $\rho^i$}
    \STATE
\end{algorithmic}
\end{algorithm}

\begin{algorithm}[H]
  \caption{Multi-agent Variational Actor Critic}
  \label{alg:multiagentsoftPG}
\begin{algorithmic}
    \STATE {\bfseries Result:} policy $\pi_{\theta^i}$, opponent model $\rho_{\phi^i}$
    
    \STATE{\bfseries Initialization}:
    
    \STATE Initialize parameters $\theta^i$, $\phi^i$, $\omega^i$, $\psi^i$ for each agent $i$ and the random process $\mathcal{N}$ for action exploration.
    
    \STATE Assign target parameters of joint action Q-function: $\Bar { \omega }^i  \leftarrow \omega$.
    % , target policy parameter: $\Bar { \theta }^i  \leftarrow \theta$ and target opponent model parameter: $\Bar { \phi }^i  \leftarrow \phi$

    \STATE Initialize learning rates $\lambda_{V}, \lambda_{Q}, \lambda_{\pi}, \lambda{\phi}, \alpha$, and set $\gamma$ as the discount factor.
    \FOR{Each episode $d=(1,\hdots,D)$}
    \STATE Initialize random process $\mathcal{N}$ for action exploration.
    % \STATE Let $t=1$.
    \FOR{each time step $t$}
        \STATE For the current state $s_t$, sample an action and opponent's action using:
        \STATE $\hat{a}^{-i}_t \leftarrow g_{\phi^{-i}}(\epsilon^{-i};s_t)$, where $\epsilon^{-i}_t \sim \mathcal{N}$,
        
        \STATE $a^i_t \leftarrow f_{\theta^{i}}(\epsilon^{i};s_t, \hat{a}^{-i}_t)$, where $\epsilon^{i}_t \sim \mathcal{N}$.
        
        \STATE Observe next state $s_{t+1}$, opponent action $a^{-i}_t$ and reward $r^i_t$, save the new experience in the replay buffer:
        $$\mathcal{D}^i\leftarrow\mathcal{D}^i\cup \{(s_t, a^i_t, a^{-i}_t, \hat{a}^{-i}_t, s_{t+1}, r^i_t)\}.$$
        
        \STATE Update the prior from the replay buffer:
        $$\psi^i=\argmax \mathbb{E}_{\mathcal{D}^i}[-P(a^{-i}|s)\log P_{\psi^i}(a^{-i}|s)]$$
        
        \STATE 
        \STATE Sample a mini-batch from the reply buffer:
        $$\{s^{(n)}_t, a^{i, {(n)}}_t, a^{-i,{(n)}}_t,\hat{a}^{-i,{(n)}}_t, s^{(n)}_{t+1}, r^{(n)}_{t}\}^{N}_{n=1} \sim \mathcal{M}.$$
        
        \STATE For the state $s^{(n)}_{t+1}$, sample an action and opponent's action using:
        \STATE $\hat{a}^{-i, (n)}_{t+1} \leftarrow g_{\phi^{-i}}(\epsilon^{-i};s^{(n)}_{t+1})$, where $\epsilon^{-i}_{t+1} \sim \mathcal{N}$,
        
        \STATE $a^{i, (n)}_{t+1}  \leftarrow f_{\Bar{\theta}^{i}}(\epsilon^{i};s^{(n)}_{t+1}, \hat{a}^{-i, (n)}_{t+1})$, where $\epsilon^{i}_{t+1} \sim \mathcal{N}$.
        
        \STATE $\Bar{V}^i(s^{(n)}_{t+1}) = Q_{\Bar{\omega}}(s^{(n)}_{t+1}, a^{i, (n)}_{t+1}, \hat{a}^{-i, (n)}_{t+1})-\alpha \log \pi_{\theta^i}(a^{i, (n)}_{t+1}|s^{(n)}_{t+1}, \hat{a}^{-i, (n)}_{t+1})-\log \rho_{\phi^i}(\hat{a}^{-i, (n)}_{t+1}|s^{(n)}_{t+1})+\log P_{\psi^i}(\hat{a}^{-i, (n)}_{t+1}|s^{(n)}_{t+1})$.
        
        \STATE Set $$ y^{(n)} = \left\{ \begin{array} { l l } { r^{(n)}_t } & { \text { for terminal } s^{(n)}_{t+1}} \\ { r^{(n)}_t + \gamma \Bar{V}^i(s^{(n)}_{t+1}) } & { \text { for non-terminal } s^{(n)}_{t+1} } \end{array} \right.$$
        
        \begin{align}
            &\nabla_{\omega^i}\mathcal{J}_Q (\omega^i)  \nonumber
            =\nabla_{\omega^i}Q_{\omega^i}(s^{(n)}_t, a^{i, {(n)}}_t, a^{-i,{(n)}}_t)(Q_{\omega^i}(s^{(n)}_t, a^{i, {(n)}}_t, a^{-i,{(n)}}_t)-y^{(n)}) \nonumber
        \end{align}
        
        \begin{align}
            &\nabla_{\theta^i}\mathcal{J}_{\pi} (\theta^i)=\nabla_{\theta^i}\alpha \log  \pi_{\theta^i}(a^{i, {(n)}}_t|s^{(n)}_t,\hat{a}^{-i,{(n)}}_t)  \nonumber\\
            &+(\nabla_{a^{i, {(n)}}_t}\alpha \log \pi_{\theta^i}(a^{i, {(n)}}_t|s^{(n)}_t,\hat{a}^{-i,{(n)}}_t) - \nabla_{a^{i, {(n)}}_t}Q_{\omega}(s^{(n)}_t,a^{i, {(n)}}_t,\hat{a}^{-i,{(n)}}_t))\nabla_{\theta}f_{\theta^i}(\epsilon^i_t;s^{(n)}_t,\hat{a}^{-i,{(n)}}_t) \nonumber
        \end{align}
        
        \begin{align}
            &\nabla_{\phi^i}\mathcal{J}_{\rho} (\phi^i)=\nabla_{\phi^i}\log  \rho_{\phi^i}(\hat{a}^{-i,{(n)}}_t|s^{(n)}_t)  \nonumber \\
            &+(\nabla_{\hat{a}^{-i,{(n)}}_t} \log \rho_{\phi^i}(\hat{a}^{-i,{(n)}}_t|s^{(n)}_t) - \nabla_{\hat{a}^{-i,{(n)}}_t} \log  P(\hat{a}^{-i,{(n)}}_t|s^{(n)}_t) - \nabla_{\hat{a}^{-i,{(n)}}_t}Q_{\omega^i}(s^{(n)}_t,a_t^{i, (n)},\hat{a}^{-i,{(n)}}_t)\nonumber\\
            &+\nabla_{\hat{a}^{-i,{(n)}}_t}\alpha \log \pi_{\theta^i}(a^{i, {(n)}}|s^{(n)}_t, \hat{a}^{-i,{(n)}}_t)) \nabla_{\phi^i}g_{\phi^i}(\epsilon^{-i}_t;s^{(n)}_t) \nonumber
        \end{align}
        \STATE Update parameters:
        
        $\omega^i = \omega^i - \lambda_{Q} \nabla_{\omega^i}\mathcal{J}_Q (\omega^i)$
        
        $\theta^i = \theta^i - \lambda_{\pi}\nabla_{\theta^i}\mathcal{J}_{\pi} (\theta^i)$
        
        $\phi^i = \phi^i - \lambda_{\phi^i}\nabla_{\phi^i}\mathcal{J}_{\rho} (\phi^i)$
    \ENDFOR
    \STATE Every $C$ gradient descent steps, reset target parameters: $$\overline{\omega^i} = \beta\omega^i + (1-\beta)\overline{\omega^i}$$.
    % $$\overline{\theta^i} = \beta\theta^i + (1-\beta)\overline{\theta^i},$$
    % $$\overline{\phi^i} = \beta\phi^i + (1-\beta)\overline{\phi^i}.$$
    \ENDFOR
\end{algorithmic}
\end{algorithm}

\section{Variational Lower Bounds in Multi-agent Reinforcement Learning}

\subsection{The Lower Bound of The Log Likelihood of Optimality}
\label{appendix:lowerboundofoptimality}
We can factorize $P(a^i_{1:T}, a^{-i}_{1:T}, s_{1:T}|o^{-i}_{1:T})$ as :
\begin{align}
    P(a^i_{1:T}, a^{-i}_{1:T}, s_{1:T}|o^{-i}_{1:T}) = P(s_1) \prod_{t}P(s_{t+1}|s_t, a_t)P(a^{i}_t|a^{-i}_t, s_t, o^{-i}_t)P(a^{-i}_t|s_t, o^{-i}_t),
\end{align}
where $P(a^{i}_t|a^{-i}_t, s_t, o^{-i}_t)$ is the conditional policy of agent $i$ when other agents $-i$ achieve optimality. As agent $i$ has no knowledge about rewards of other agents, we set $P(a^{i}_t|a^{-i}_t, s_t, o^{-i}_t)\propto 1$.

Analogously, we factorize $q(a^i_{1:T}, a^{-i}_{1:T}, s_{1:T}|o^i_{1:T}, o^{-i}_{1:T})$ as:
\begin{align}
    q(a^i_{1:T}, a^{-i}_{1:T}, s_{1:T}|o^i_{1:T}, o^{-i}_{1:T}) &= P(s_1) \prod_{t}P(s_{t+1}|s_t, a_t) q(a^{i}_t|a^{-i}_t, s_t, o^{i}_t, o^{-i}_t)q(a^{-i}_t|s_t,o^{i}_t, o^{-i}_t) \\
    &=P(s_1) \prod_{t}P(s_{t+1}|s_t, a_t) \pi(a^{i}_t|s_t, a^{-i}_t)
    \rho(a^{-i}_t|s_t),
\end{align}
where $\pi(a^{i}_t|a^{-i}_t, s_t)$ is agent $1$'s conditional policy at optimum and $\rho(a^{-i}_t|s_t)$ is agent $1$'s model about opponents' optimal policies.

With the above factorization, we have:
\begin{align}
    &\log P(o^i_{1:T}|o^{-i}_{1:T}) \nonumber \\
    &=\log \sum_{a^i_{1:T}, a^{-i}_{1:T}, s_{1:T}}P(o^i_{1:T},a^i_{1:T}, a^{-i}_{1:T}, s_{1:T}|o^{-i}_{1:T}) \\
    &\geq \sum q(a^i_{1:T}, a^{-i}_{1:T}, s_{1:T}|o^i_{1:T}, o^{-i}_{1:T})\log \frac{P(o^i_{1:T},a^i_{1:T}, a^{-i}_{1:T}, s_{1:T}|o^{-i}_{1:T})}{q(a^i_{1:T}, a^{-i}_{1:T}, s_{1:T}|o^i_{1:T}, o^{-i}_{1:T})} \\
    &= \mathbb{E}_{(a^i_{1:T}, a^{-i}_{1:T}, s_{1:T}\sim q)}[\sum_{t=1}^T\log P(o^i_t|o^{-i}_t,s_t, a^i_t, a^{-i}_t)+\bcancel{\log P(s_1)}+\bcancel{\sum_{t=1}^T\log P(s_{t+1}|s_t, a^i_t, a^{-i}_t)} \\
    &\bcancel{-\log P(s_1)}-\bcancel{\sum_{t=1}^T\log P(s_{t+1}|s_t, a^i_t, a^{-i}_t)} \\
    &- \sum_{t=1}^T\log \pi(a^i_t|s_t, a_t^{-i})-\sum_{t=1}^T\log \frac{\rho(a^{-i}_t|s_t)}{P(a^{-i}_t|s_t, o^{-i}_t)} + \sum_{t=1}^T\log P(a^{i}_t|s_t, a^{-i}_t, o^{-i}_t)] \\
    &= \mathbb{E}_{(a^i_{1:T}, a^{-i}_{1:T}, s_{1:T}\sim q)}[\sum_{t=1}^TR^i(s_t, a_t^i, a_t^{-i})-\log \pi(a^i_t|s_t, a_t^{-i})-\log \frac{\rho(a^{-i}_t|s_t)}{P(a^{-i}_t|s_t, o^{-i}_t)} +\log 1]\\
    &= \sum_t\mathbb{E}_{(s_t, a^i_t, a^{-i}_t)\sim q}[R^i(s_t, a_t^i, a_t^{-i})+H(\pi(a^i_t|s_t, a_t^{-i}))-D _ { \mathrm { KL } }(\rho(a^{-i}_t|s_t)||P(a^{-i}_t|s_t, o^{-i}_t))].
\end{align}
% where we assume that given joint actions $(a^i, a^{-i})$ and state $s$, the optimality of agent i $o^i=1$ is independent of other agents' optimalities:
% \begin{align}
%     P(o^i|s, a^i, a^{-i}, o^{-i}) = P(o^i|s, a^i, a^{-i}).
% \end{align}

% \subsection{The Lower Bound on Opponent model}
% \label{appendix:lowerboundofopponentmodel}
% From Eq. \ref{eq:probabilisticiff}, we have:
% \begin{align}
%     \log P(\rho=\pi^{-i*}|s) &= \log P(B^i(
%     \rho)=\pi^{i*}, B^{-i}(\pi^{i*})=\rho|s) \nonumber \\
%     &= \log P(B^i(
%     \rho)=\pi^{i*}|s) +\underbrace{\log P(B^{-i}(\pi^{i*})=\rho|s)}_{0} \nonumber \\
%     &=\log \mathbb{E}_{a^{-i}\sim\rho}[P(B^i(a^{-i})=\pi^{i*}|s, a^{-i})] \nonumber \\
%     &=\log \mathbb{E}_{a^i\sim B^i(a^{-i}), a^{-i}\sim\rho}[P(a^i\sim\pi^{i*}|s, a^{i}, a^{-i})] \nonumber \\
%     &=\log \mathbb{E}_{a^i\sim B^i(a^{-i}), a^{-i}\sim\rho}[P(o^i|s, a^{i}, a^{-i})] \nonumber \\
%     &\geq\mathbb{E}_{a^i\sim \pi^i(a^{-i}), a^{-i}\sim\rho}[\log P(o^i|s, a^{i}, a^{-i})+H(\pi(a^i|s, a^{-i}))],
% \end{align}
% where we set the best response function to an arbitrary opponent policies as: $B^i(\rho)\propto 1$.

\section{Multi-Agent Soft-Q Learning}
\subsection{Soft Q-Function}
\label{appendix:softq}
We define the soft state-action value function $Q^{\pi,\rho}_{soft}(s, a, a^{-i})$ of agent $i$ in a stochastic game as: 
\begin{align}
    &Q^{\pi,\rho}_{soft}(s_t, a^i_t, a^{-i}_t) \nonumber \\
    &= r_t+\mathbb{E}_{(s_{t+l},a^{i}_{t+l}, a^{-i}_{t+l}, \hdots)\sim q}[\sum_{l=1}^{\infty}\gamma^{l}(r_{t+l}+\alpha H(\pi(a^{i}_{t+l}|a_{t+l}^{-i}, s_{t+l}))-D _ { \mathrm { KL } }(\rho(a^{-i}_{t+l}|s_{t+l})||P(a_{t+l}^{-i}|s_{t+l}))] \\
    &= \mathbb{E}_{(s_{t+1},a^{i}_{t+1}, a^{-i}_{t+1})}[ r_t + \gamma (\alpha H(\pi(a^{i}_{t+1}|s_{t+1}, a^{-i}_{t+1}))-D _ { \mathrm { KL } }(\rho(a^{-i}|s_{t+1})||P(a^{-i}|s_{t+1}))+{Q^{\pi,\rho}_{soft}(s_{t+1}, a^{i}_{t+1}, a^{-i}_{t+1})})]\\
    &= \mathbb{E}_{(s_{t+1}, a^{-i}_{t+1})}[ r_t + \gamma (\alpha H(\pi(\cdot|s_{t+1}, a^{-i}_{t+1}))-D _ { \mathrm { KL } }(\rho(a^{-i}|s_{t+1})||P(a^{-i}|s_{t+1}))+\mathbb{E}_{a^{i}_{t+1}\sim \pi}[Q^{\pi,\rho}_{soft}(s_{t+1}, a^{i}_{t+1}, a^{-i}_{t+1})])] \\
    &= \mathbb{E}_{(s_{t+1})}[ r_t + \gamma (\mathbb{E}_{a^{-i}_{t+1}\sim \rho ,a^{i}_{t+1}\sim \pi}[\alpha H(\pi(a^{i}_{t+1}|s_{t+1}, a^{-i}_{t+1}))]-D _ { \mathrm { KL } }(\rho(\cdot|s_{t+1})||P(\cdot|s_{t+1}))] \nonumber \\
    &+\mathbb{E}_{a^{-i}_{t+1}\sim \rho, a^{i}_{t+1}\sim \pi}[Q^{\pi,\rho}_{soft}(s_{t+1}, a^{i}_{t+1}, a^{-i}_{t+1})])],
    % &=r_t+\gamma\mathbb{E}_{(s_{t+1})}[V^{\pi,\rho}_{soft}(s_{t+1})]
\end{align}

Then we can easily see that the objective in Eq. \ref{eq:novelobjective} can be rewritten as:
\begin{align}
    \mathcal{J}(\pi, \phi) = \sum_t\mathbb{E}_{(s_t, a^i_t, a^{-i}_t)\sim(p_s,\pi,\rho)}[Q^{\pi,\rho}_{soft}(s_t, a^i_t, a^{-i}_t)+\alpha H(\pi(a^i_t|s_t, a_t^{-i}))-D _ { \mathrm { KL } }(\rho(a^{-i}_t|s_t)||P(a^{-i}_t|s_t))],
\end{align}
by setting $\alpha=1$.

\subsection{Policy Improvement and Opponent Model Improvement}
\label{appendix:policyimprovementandopponentmodelimprovement}
\begin{theorem} (\textit{Policy improvement theorem})
Given a conditional policy $\pi$ and opponent model $\rho$, define a new conditional policy $\Tilde{\pi}$ as 
\begin{equation}
    \Tilde{\pi}(\cdot|s, a^{-i})\propto \exp (\frac{1}{\alpha}Q^{\pi,\rho}_{soft}(s, \cdot,a^{-i})), \forall{s, a^{-i}}.
\end{equation}
Assume that throughout our computation, Q is bounded and $\sum_{a^i}Q(s, a^i,a^{-i})$ is bounded for any $s$ and $a^{-i}$ (for both $\pi$ and $\Tilde{\pi}$). Then $Q^{\Tilde{\pi}, \rho}_{soft}(s, a^i, a^{-i})\geq Q^{\pi,\rho}_{soft}(s, a^i, a^{-i}) \forall{s, a}.$\label{theorem:policyimprove}
\end{theorem}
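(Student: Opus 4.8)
The plan is to mirror the soft policy improvement argument of Haarnoja et al., adapted to the feature that here only the conditional policy $\pi$ is updated while the opponent model $\rho$ is held fixed throughout. First I would put $Q^{\pi,\rho}_{soft}$ into the recursive one-step form already established in Appendix~\ref{appendix:softq}: defining the soft value $V^{\pi,\rho}_{soft}(s) = \mathbb{E}_{a^{-i}\sim\rho}\big[\mathbb{E}_{a^i\sim\pi}[Q^{\pi,\rho}_{soft}(s,a^i,a^{-i}) - \alpha\log\pi(a^i|s,a^{-i})]\big] - D_{\mathrm{KL}}(\rho(\cdot|s)\|P(\cdot|s))$, we have $Q^{\pi,\rho}_{soft}(s_t,a^i_t,a^{-i}_t) = r_t + \gamma\,\mathbb{E}_{s_{t+1}}[V^{\pi,\rho}_{soft}(s_{t+1})]$, where the trailing value is evaluated under the old pair $(\pi,\rho)$.

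The crucial local step is to analyze, for each fixed $(s,a^{-i})$, the functional $W^{\mu}(s,a^{-i}) = \mathbb{E}_{a^i\sim\mu}[Q^{\pi,\rho}_{soft}(s,a^i,a^{-i}) - \alpha\log\mu(a^i|s,a^{-i})]$ over conditional policies $\mu$, keeping the \emph{old} $Q^{\pi,\rho}_{soft}$ frozen. I would rewrite it as $W^{\mu}(s,a^{-i}) = \alpha\log Z(s,a^{-i}) - \alpha\,D_{\mathrm{KL}}(\mu(\cdot|s,a^{-i})\|\Tilde{\pi}(\cdot|s,a^{-i}))$, where $Z(s,a^{-i})=\sum_{a^i}\exp(\tfrac{1}{\alpha}Q^{\pi,\rho}_{soft}(s,a^i,a^{-i}))$ is exactly the normalizer of $\Tilde{\pi}$. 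Nonnegativity of the KL divergence then gives $W^{\Tilde{\pi}}(s,a^{-i})\geq W^{\pi}(s,a^{-i})$ pointwise. Since $\rho$ and the prior $P$ are unchanged between $\pi$ and $\Tilde{\pi}$, averaging over $a^{-i}\sim\rho$ and subtracting the common $D_{\mathrm{KL}}(\rho\|P)$ term yields the pointwise value inequality $V^{\pi,\rho}_{soft}(s)\leq \mathbb{E}_{a^{-i}\sim\rho}[W^{\Tilde{\pi}}(s,a^{-i})] - D_{\mathrm{KL}}(\rho(\cdot|s)\|P(\cdot|s))$.

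Finally I would iterate this bound. Substituting it into the Bellman form gives $Q^{\pi,\rho}_{soft}(s_t,a^i_t,a^{-i}_t) \leq r_t + \gamma\,\mathbb{E}_{s_{t+1},\,a^{-i}_{t+1}\sim\rho,\,a^i_{t+1}\sim\Tilde{\pi}}[\alpha H(\Tilde{\pi}(a^i_{t+1}|s_{t+1},a^{-i}_{t+1})) - D_{\mathrm{KL}}(\rho\|P)(s_{t+1}) + Q^{\pi,\rho}_{soft}(s_{t+1},a^i_{t+1},a^{-i}_{t+1})]$, in which the trailing $Q$ is still the old one. Repeatedly expanding this trailing $Q^{\pi,\rho}_{soft}$ by the same inequality telescopes the right-hand side, at each level inserting precisely the entropy term $\alpha H(\Tilde{\pi})$ and the regularizer $-D_{\mathrm{KL}}(\rho\|P)$ that appear in the definition of $Q^{\Tilde{\pi},\rho}_{soft}$, with all sampling done under $(\Tilde{\pi},\rho)$, leaving a remainder of the form $\gamma^{n}\mathbb{E}[Q^{\pi,\rho}_{soft}(s_{t+n},\cdot,\cdot)]$.

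The main obstacle I anticipate is making this limiting step rigorous. I must use the boundedness hypotheses on $Q$ together with $\gamma<1$ to guarantee that the remainder $\gamma^{n}\mathbb{E}[Q^{\pi,\rho}_{soft}]\to 0$ as $n\to\infty$, and verify that the expanded trajectory distribution is consistently $(\Tilde{\pi},\rho)$ at every level so that the telescoped series is \emph{exactly} $Q^{\Tilde{\pi},\rho}_{soft}$ rather than a hybrid object. Once the remainder is shown to vanish and the per-level entropy and KL terms are checked to match the definition term by term, the chain of inequalities collapses to $Q^{\Tilde{\pi},\rho}_{soft}(s,a^i,a^{-i})\geq Q^{\pi,\rho}_{soft}(s,a^i,a^{-i})$ for all $(s,a^i,a^{-i})$, as claimed.
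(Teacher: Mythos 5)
Your proposal is correct and follows essentially the same route as the paper's proof: your identity $W^{\mu}(s,a^{-i}) = \alpha\log Z(s,a^{-i}) - \alpha D_{\mathrm{KL}}(\mu\|\Tilde{\pi})$ is exactly the paper's key observation (its Eq.~\ref{equation:ob1}), and your telescoping of the Bellman recursion under $(\Tilde{\pi},\rho)$ with the $\gamma^{n}$ remainder vanishing by boundedness is the same chain of inequalities the paper writes out (specialized to updating $\pi$ only, with $\rho$ held fixed, which is the correct reading of this theorem in isolation). Your explicit attention to the vanishing remainder and to the consistency of the trajectory distribution makes rigorous what the paper compresses into a ``$\vdots$''.
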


\begin{theorem} (\textit{Opponent model improvement theorem})
Given a conditional policy $\pi$ and opponent model $\rho$, define a new opponent model $\Tilde{\rho}$ as 
\begin{equation}
    \Tilde{\rho}(\cdot|s)\propto \exp (\sum_{a^i}Q^{\pi,\rho}_{soft}(s, a^i,\cdot)\pi(a^i|\cdot, s)+\alpha H(\pi(s))+\log P(\cdot|s)), \forall{s, a^{i}}.
\end{equation}
Assume that throughout our computation, Q is bounded and $\sum_{a^{-i}}\exp (\sum_{a^i}Q(s, a^i,a^{-i})\pi(a^i|s, a^{-i}))$ is bounded for any $s$ and $a^{i}$ (for both $\rho$ and $\Tilde{\rho}$). Then $Q^{\pi,\Tilde{\rho}}_{soft}(s, a^i, a^{-i})\geq Q^{\pi, \rho}_{soft}(s, a^i, a^{-i}) \forall{s, a}.$\label{theorem:opponentimprove}
\end{theorem}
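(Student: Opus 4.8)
The plan is to mirror the soft policy-improvement argument (Theorem~\ref{theorem:policyimprove}) but apply it to the opponent model, treating the old soft Q-function $Q^{\pi,\rho}_{soft}$ as a \emph{fixed} function while optimizing only the \emph{outer} opponent distribution. First I would read off the recursive soft value function from the expansion in Appendix~\ref{appendix:softq}, namely $V^{\pi,\rho}(s) = \mathbb{E}_{a^{-i}\sim\rho,\,a^i\sim\pi}[Q^{\pi,\rho}_{soft}(s,a^i,a^{-i}) + \alpha H(\pi(\cdot|s,a^{-i}))] - D_{\mathrm{KL}}(\rho(\cdot|s)\|P(\cdot|s))$, so that $Q^{\pi,\rho}_{soft}(s,a^i,a^{-i}) = r_t + \gamma\mathbb{E}_{s_{t+1}}[V^{\pi,\rho}(s_{t+1})]$. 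This gives me a clean one-step decomposition in which the opponent model enters the value only through an outer expectation and a KL penalty.

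Next, for fixed $\pi$ and the fixed function $Q^{\pi,\rho}_{soft}$, I would introduce the one-step functional that scores an arbitrary candidate opponent model $\rho'$: $\mathcal{V}_{\rho'}(s) = \sum_{a^{-i}}\rho'(a^{-i}|s)\bigl[g(s,a^{-i}) + \log P(a^{-i}|s) - \log\rho'(a^{-i}|s)\bigr]$, where $g(s,a^{-i}) = \sum_{a^i}\pi(a^i|s,a^{-i})Q^{\pi,\rho}_{soft}(s,a^i,a^{-i}) + \alpha H(\pi(\cdot|s,a^{-i}))$ is exactly the exponent appearing in the definition of $\tilde{\rho}$. The core computation is to show $\tilde{\rho}$ maximizes $\mathcal{V}_{\rho'}$: writing $\tilde{\rho}(a^{-i}|s) = P(a^{-i}|s)\exp(g(s,a^{-i}))/Z(s)$ with $Z(s)=\sum_{a^{-i}}P(a^{-i}|s)\exp(g(s,a^{-i}))$ and substituting $g+\log P = \log\tilde{\rho} + \log Z$, the functional collapses to $\mathcal{V}_{\rho'}(s) = \log Z(s) - D_{\mathrm{KL}}(\rho'(\cdot|s)\|\tilde{\rho}(\cdot|s))$. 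By nonnegativity of the KL divergence this is maximized at $\rho'=\tilde{\rho}$, so in particular $\mathcal{V}_{\rho}(s) \le \mathcal{V}_{\tilde{\rho}}(s)$. Since the candidate $\rho'=\rho$ coincides with the model used to define $Q^{\pi,\rho}_{soft}$, we have $\mathcal{V}_{\rho}(s) = V^{\pi,\rho}(s)$, yielding the key one-step inequality $V^{\pi,\rho}(s) \le \mathcal{V}_{\tilde{\rho}}(s)$ for every $s$.

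Finally I would propagate this bound through the Bellman recursion. Define the backup $T^{\tilde{\rho}}$ that inserts $\tilde{\rho}$ into the outer expectation and KL term but leaves an arbitrary $Q$ inside; then $Q^{\pi,\rho}_{soft}=T^{\rho}Q^{\pi,\rho}_{soft}$ while the one-step inequality reads $Q^{\pi,\rho}_{soft} \le T^{\tilde{\rho}}Q^{\pi,\rho}_{soft}$ pointwise. Because $T^{\tilde{\rho}}$ is monotone (its dependence on $Q$ is through an expectation with nonnegative weights, the remaining entropy and KL terms being independent of $Q$) and is a $\gamma$-contraction with unique fixed point $Q^{\pi,\tilde{\rho}}_{soft}$ (guaranteed by the stated boundedness of $Q$ and of $Z(s)$, which is precisely the contraction argument underlying Theorem~\ref{theorem:convergence}), iterating the inequality gives $Q^{\pi,\rho}_{soft} \le T^{\tilde{\rho}}Q^{\pi,\rho}_{soft} \le (T^{\tilde{\rho}})^2 Q^{\pi,\rho}_{soft} \le \cdots \to Q^{\pi,\tilde{\rho}}_{soft}$, which establishes $Q^{\pi,\tilde{\rho}}_{soft}(s,a^i,a^{-i}) \ge Q^{\pi,\rho}_{soft}(s,a^i,a^{-i})$ for all $s,a^i,a^{-i}$.

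The hard part will be the bookkeeping in the variational step, because $\rho$ plays two distinct roles: it is the outer sampling distribution in $V^{\pi,\rho}$ and is simultaneously baked into $Q^{\pi,\rho}_{soft}$ as the distribution generating all \emph{future} opponent actions. I must hold the inner copy fixed while optimizing only the outer one, establishing a genuine single-step improvement, and then let the monotone $\gamma$-contraction upgrade that single-step gain into the full pointwise ordering; the boundedness assumptions in the statement are exactly what guarantee $Z(s)<\infty$ and the convergence of the iterated backup.
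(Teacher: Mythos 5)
Your proposal is correct and its core is exactly the paper's argument: the variational step in which you complete the KL divergence to rewrite the one-step objective as $\log Z(s) - D_{\mathrm{KL}}(\rho'(\cdot|s)\,\|\,\tilde{\rho}(\cdot|s))$ is precisely the paper's key observation (Eq.~\ref{equation:ob2} in Appendix~\ref{appendix:policyimprovementandopponentmodelimprovement}), with the old $Q^{\pi,\rho}_{soft}$ held fixed inside just as the paper does. The only difference is presentational: where the paper propagates the one-step gain by explicitly unrolling the Bellman recursion through successive timesteps (leaving the limit at the ``$\vdots$'' step informal), you package the same propagation as monotonicity plus $\gamma$-contraction of the fixed-$(\pi,\tilde{\rho})$ evaluation operator iterated to its fixed point --- a cleaner formalization of the identical chain of inequalities (note only that the contraction you need is for this \emph{linear} policy-evaluation backup, which is simpler than the nonlinear log-sum-exp operator of Theorem~\ref{theorem:convergence} that you cite).
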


The proof of Theorem \ref{theorem:policyimprove} and \ref{theorem:opponentimprove} is based on two observations that:
\begin{equation}
    \alpha H(\pi(\cdot|s, a^{-i}))+\mathbb{E}_{a^{i}\sim \pi}[Q^{\pi,\rho}_{soft}(s, a^{i}, a^{-i})] \leq \alpha H(\Tilde{\pi}(\cdot|s, a^{-i}))+\mathbb{E}_{a^{i}\sim \Tilde{\pi}}[Q^{\pi,\rho}_{soft}(s, a^{i}, a^{-i})], \label{equation:ob1}
\end{equation} and
\begin{align}
    &\mathbb{E}_{a^{-i}_{t+1}\sim \rho ,a^{i}_{t+1}\sim \pi}[\alpha H(\pi(a^{i}_{t+1}|s_{t+1}, a^{-i}_{t+1}))]-D _ { \mathrm { KL } }(\rho(\cdot|s_{t+1})||P(\cdot|s_{t+1}))] +\mathbb{E}_{a^{-i}_{t+1}\sim \rho, a^{i}_{t+1}\sim \pi}[Q^{\pi,\rho}_{soft}(s_{t+1}, a^{i}_{t+1}, a^{-i}_{t+1})] \\
    &\leq \mathbb{E}_{a^{-i}_{t+1}\sim \Tilde{\rho} ,a^{i}_{t+1}\sim \pi}[\alpha H(\pi(a^{i}_{t+1}|s_{t+1}, a^{-i}_{t+1}))]-D _ { \mathrm { KL } }(\Tilde{\rho}(a^{-i}_{t+1}|s_{t+1})||P(\cdot|s_{t+1})) +\mathbb{E}_{a^{-i}_{t+1}\sim \Tilde{\rho}, a^{i}_{t+1}\sim \pi}[Q^{\pi,\rho}_{soft}(s_{t+1}, a^{i}_{t+1}, a^{-i}_{t+1})]\label{equation:ob2}.
    % &\alpha(\mathbb{E}_{a^{-i}\sim \rho}[H(\pi(a^{i}|s, a^{-i}))]-D _ { \mathrm { KL } }(\rho(\cdot|s)||P(\cdot|s)))+\mathbb{E}_{a^{-i}\sim \rho}[Q^{\pi\times\rho}_{soft}(s, a^{i}, a^{-i})]) \nonumber \\
    % &\leq \alpha(\mathbb{E}_{a^{-i}\sim \Tilde{\rho}}[H(\pi(a^{i}|s, a^{-i}))]-D _ { \mathrm { KL } }(\Tilde{\rho}(\cdot|s)||P(\cdot|s)))+\mathbb{E}_{a^{-i}\sim \Tilde{\rho}}[Q^{\pi\times\rho}_{soft}(s, a^{i}, a^{-i})])\label{equation:ob2}.
\end{align}

First, we notice that 
\begin{equation}
    \alpha H(\pi(\cdot|s, a^{-i}))+\mathbb{E}_{a^{i}\sim \pi}[Q^{\pi,\rho}_{soft}(s, a^{i}, a^{-i})] = -\alpha D _ { \mathrm { KL } }(\pi(\cdot|s, a^{-i})||\Tilde{\pi}(\cdot|s, a^{-i})) +\alpha \log  \sum_{a^i}\exp (\frac{1}{\alpha}Q^{\pi,\rho}_{soft}(s, a^i, a^{-i})).
\end{equation}
Therefore, the LHS is only maximized if the KL-Divergence on the RHS is minimized. This KL-Divergence is minimized only when $\pi = \Tilde{\pi}$, which proves the Equation \ref{equation:ob1}.

Similarly, we can have 
\begin{align}
    &\mathbb{E}_{a^{-i}\sim \rho ,a^{i}\sim \pi}[\alpha H(\pi(a^{i}|s, a^{-i}))]-D _ { \mathrm { KL } }(\rho(\cdot|s)||P(\cdot|s))]) +\mathbb{E}_{a^{-i}\sim \rho, a^{i}\sim \pi}[Q^{\pi,\rho}_{soft}(s, a^{i}, a^{-i})] \nonumber \\
    &= -D _ { \mathrm { KL } }(\rho(\cdot|s)||\Tilde{\rho}(\cdot|s))+\log \sum_{a^{-i}}\exp (\sum_{a^i}Q^{\pi,\rho}(s, a^i, a^{-i})\pi(a^i|s, a^{-i})+\alpha H(\pi(\cdot|s, a^{\-i}))+\log P(a^{-i|s})),
\end{align}
which proves the Equation \ref{equation:ob2}.

With the above observations, the proof of Theorem \ref{theorem:policyimprove} and \ref{theorem:opponentimprove} is completed by as follows:
\begin{align}
    &Q^{\pi,\rho}_{soft}(s_t, a^i_t, a^{-i}_t) \nonumber \\
    &= \mathbb{E}_{(s_{t+1},a^{i}_{t+1}, a^{-i}_{t+1})}[ r_t + \gamma (\alpha H(\pi(a^{i}_{t+1}|s_{t+1}, a^{-i}_{t+1}))-D _ { \mathrm { KL } }(\rho(a^{-i}_{t+1}|s_{t+1})||P(a^{-i}_{t+1}|s_{t+1}))+Q^{\pi,\rho}_{soft}(s_{t+1}, a^{i}_{t+1}, a^{-i}_{t+1}))]\\
    &= \mathbb{E}_{(s_{t+1}, a^{-i}_{t+1})}[ r_t + \gamma (\alpha H(\pi(\cdot|s_{t+1}, a^{-i}_{t+1}))-D _ { \mathrm { KL } }(\rho(a^{-i}_{t+1}|s_{t+1})||P(a^{-i}_{t+1}|s_{t+1}))+\mathbb{E}_{a^{i}_{t+1}\sim \pi}[Q^{\pi,\rho}_{soft}(s_{t+1}, a^{i}_{t+1}, a^{-i}_{t+1})])] \\
    &\leq \mathbb{E}_{(s_{t+1}, a^{-i}_{t+1})}[ r_t + \gamma (\alpha H(\Tilde{\pi}(\cdot|s_{t+1}, a^{-i}_{t+1}))-D _ { \mathrm { KL } }(\rho(a^{-i}_{t+1}|s_{t+1})||P(a^{-i}_{t+1}|s_{t+1}))+\mathbb{E}_{a^{i}_{t+1}\sim \Tilde{\pi}}[Q^{\pi,\rho}_{soft}(s_{t+1}, a^{i}_{t+1}, a^{-i}_{t+1})])] \\
    &= \mathbb{E}_{(s_{t+1})}[ r_t + \gamma (\mathbb{E}_{a^{-i}_{t+1}\sim \rho, a^i_{t+1}\sim \pi}[\alpha H(\Tilde{\pi}(a^{i}_{t+1}|s_{t+1}, a^{-i}_{t+1}))]-D _ { \mathrm { KL } }(\rho(\cdot|s_{t+1})||P(\cdot|s_{t+1}))\nonumber\\
    &+\mathbb{E}_{a^{-i}_{t+1}\sim \rho, a^i_{t+1}\sim \pi}[Q^{\pi,\rho}_{soft}(s_{t+1}, a^{i}_{t+1}, a^{-i}_{t+1})])] \\
    &\leq \mathbb{E}_{(s_{t+1})}[ r_t + \gamma (\mathbb{E}_{a^{-i}_{t+1}\sim \Tilde{\rho}, a^i_{t+1}\sim \pi}[\alpha H(\Tilde{\pi}(a^{i}_{t+1}|s_{t+1}, a^{-i}_{t+1}))]-D _ { \mathrm { KL } }(\Tilde{\rho}(\cdot|s_{t+1})||P(\cdot|s_{t+1}))\nonumber\\
    &+\mathbb{E}_{a^{-i}_{t+1}\sim \Tilde{\rho}, a^i_{t+1}\sim \pi}[Q^{\pi,\rho}_{soft}(s_{t+1}, a^{i}_{t+1}, a^{-i}_{t+1})])] \\
    % &= \mathbb{E}_{(s_{t+1},a^{i}_{t+1}, a^{-i}_{t+1})}[ r_t + \gamma (\alpha(H(\pi(a^{i}_{t+1}|s_{t+1}, a^{-i}_{t+1}))-D _ { \mathrm { KL } }(\rho(a^{-i}_{t+1}|s_{t+1})||P(a^{-i}_{t+1}|s_{t+1})))+{Q^q_{soft}(s_{t+1}, a^{i}_{t+1}, a^{-i}_{t+1})})]\\
    &= \mathbb{E}_{(s_{t+1},a^{i}_{t+1}, a^{-i}_{t+1})\sim\Tilde{q}}[ r_t + \gamma (\alpha H(\Tilde{\pi}(a^{i}_{t+1}|s_{t+1}, a^{-i}_{t+1}))-D _ { \mathrm { KL } }(\Tilde{\rho}(a^{-i}|s_{t+1})||P(a^{-i}|s_{t+1}))+r_{t+1}) \nonumber \\
    &+\gamma ^2 \mathbb{E}_{(s_{t+2}, a^{-i}_{t+2})}[\alpha H(\pi(\cdot|s_{t+2}, a^{-i}_{t+2}))-D _ { \mathrm { KL } }(\rho(a^{-i}_{t+2}|s_{t+2})||P(a^{-i}_{t+2}|s_{t+2}))+\mathbb{E}_{a^i_{t+2}\sim \pi}[Q^{\pi,\rho}_{soft}(s_{t+2}, a^{i}_{t+2}, a^{-i}_{t+2})]]] \\
    &\leq \mathbb{E}_{(s_{t+1},a^{i}_{t+1}, a^{-i}_{t+1})}[ r_t + \gamma (\alpha H(\Tilde{\pi}(a^{i}_{t+1}|s_{t+1}, a^{-i}_{t+1}))-D _ { \mathrm { KL } }(\Tilde{\rho}(a^{-i}|s_{t+1})||P(a^{-i}|s_{t+1}))+r_{t+1}) \nonumber \\
    &+\gamma ^2 \mathbb{E}_{(s_{t+2}, a^{-i}_{t+2})}[\alpha H(\pi(\cdot|s_{t+2}, a^{-i}_{t+2}))-D _ { \mathrm { KL } }(\rho(a^{-i}_{t+2}|s_{t+2})||P(a^{-i}_{t+2}|s_{t+2}))+\mathbb{E}_{a^i_{t+2}\sim \Tilde{\pi}}[Q^{\pi,\rho}_{soft}(s_{t+2}, a^{i}_{t+2}, a^{-i}_{t+2})]]] \\
    &\vdots \nonumber \\
    &\leq r_t+\mathbb{E}_{(s_{t+l},a^{i}_{t+l}, a^{-i}_{t+l}, \hdots)\sim \Tilde{q}}[\sum_{l=1}^{\infty}\gamma^{l}(r_{t+l}+\alpha H(\Tilde{\pi}(a^{i}_{t+l}|a_{t+l}^{-i}, s_{t+l}))-D _ { \mathrm { KL } }(\Tilde{\rho}(a^{-i}_{t+l}|s_{t+l})||P(a_{t+l}^{-i}|s_{t+l}))] \\
    &= Q^{\Tilde{\pi}, \Tilde{\rho}}_{soft}(s_t, a^i_t, a^{-i}_t).
\end{align}

With Theorem \ref{theorem:policyimprove} and \ref{theorem:opponentimprove} and the above inequalities, we can see that, if we start from an arbitrary conditional policy $\pi_0$ and an arbitrary opponent model $\rho_0$ and we iterate between policy improvement as 
\begin{equation}
    \pi_{i+1}(\cdot|s, a^{-i})\propto \exp (\frac{1}{\alpha}Q^{\pi_{t},\rho_{t}}_{soft}(s, \cdot,a^{-i})), 
\end{equation}
and opponent model improvement as
\begin{equation}
        \rho_{t+1}(\cdot|s)\propto \exp (\sum_{a^i}Q^{\pi_{t+1},\rho_{t}}_{soft}(s, a^i,\cdot)\pi_{t+1}(a^i|\cdot, s)+\alpha H(\pi_{t+1}(s))+\log P(\cdot|s)),
\end{equation}
then $Q^{\pi_{t},\rho_{t}}_{soft}(s, a^i,a^{-i})$ can be shown to increase monotonically. Similar to \cite{HaarnojaTAL17rldep}, we can show that with certain regularity conditions satisfied, any non optimal policy and opponent model can be improved this way and Theorem \ref{theorem:optimalpiandrho} is proved.

\subsection{Soft Bellman Equation}
\label{appendix:softbellmanequation}
As we show in Appendix \ref{appendix:policyimprovementandopponentmodelimprovement}, when the training converges, we have: 
\begin{align}
\label{eq:optimalpolicyinappendix}
    \pi^*(a^i|s, a^{-i})=\frac{\frac{1}{\alpha}\exp (Q^*(s, a^i, a^{-i}))}{\sum_{a^i}\exp (\frac{1}{\alpha}Q^*(s, a^i, a^{-i}))},
\end{align}
and
\begin{align}
\label{eq:optimalopponentinappendix}
    \rho^*(a^{-i}|s)&=\frac{\exp (\sum_{a^i}Q^*(s,a^i, a^{-i})\pi^*(a^i|s,a^{-i})+\alpha H(\pi^*(a^i|s, a^{-i}))+\log P(a^{-i}|s))}{\sum_{a^{-i}}\exp (\sum_{a^i}Q^*(s,a^i, a^{-i})\pi^*(a^i|s,a^{-i})+\alpha H(\pi^*(a^i|s, a^{-i}))+\log P(a^{-i}|s))}\nonumber\\
    &=\frac{P(a^{-i}|s)\left(\sum_{a^i}\exp (Q^*_{soft}(s, a^i, a^{-i}))\right)^\alpha}{\exp (V^*(s))},
\end{align}
where the equality in Eq. \ref{eq:optimalopponentinappendix} comes from substituting $\pi^*$ with Eq. \ref{eq:optimalpolicyinappendix} and we define the soft sate value function $V^{\pi,\rho}_{soft}(s)$ of agent $i$ as:
\begin{equation}
    V^{\pi,\rho}_{soft}(s_t) = \log \sum_{a^{-i}_t}P(a^{-i}_t|s_t)\left(\sum_{a^i_t}\exp \left(\frac{1}{\alpha}Q^{\pi,\rho}_{soft}(s_t, a^i_t, a^{-i}_t)\right)\right)^\alpha.
\end{equation}

Then we can show that
\begin{align}
    &Q^{\pi^*,\rho^*}_{soft}(s, a^i, a^{-i}) \nonumber \\
    &=r_t+\gamma\mathbb{E}_{s'\sim p_s}[(\mathbb{E}_{a^{-i}_{t+1}\sim \rho ,a^{i}_{t+1}\sim \pi}[\alpha H(\pi(a^{i}_{t+1}|s_{t+1}, a^{-i}_{t+1}))]-D _ { \mathrm { KL } }(\rho(\cdot|s_{t+1})||P(\cdot|s_{t+1}))] \nonumber \nonumber \\
    &+\mathbb{E}_{a^{-i}_{t+1}\sim \rho, a^{i}_{t+1}\sim \pi}[Q^{\pi,\rho}_{soft}(s_{t+1}, a^{i}_{t+1}, a^{-i}_{t+1})])]\nonumber \\
    &=r_t+\gamma\mathbb{E}_{s'\sim p_s}[V^*(s')].
\end{align}

We define the soft value iteration operator $\mathcal{T}$ as:
\begin{align}
    \mathcal{T}Q(s, a^i, a^{-i}) = R(s, a^i,a^{-i}) + \gamma\mathbb{E}_{s'\sim p_s}\left[\log \sum_{a^{-i\prime}}P(a^{-i\prime}|s^\prime)\left(\sum_{a^{i\prime}}\exp \left(\frac{1}{\alpha}Q(s^\prime, a^{i\prime}, a^{-i\prime})\right)\right)^\alpha\right].
\end{align}

In a symmetric fully cooperative game with only one global optimum,  we can show as done in \cite{wen2018probabilistic}, the operator defined above is a contraction mapping. We define a norm on Q-values $\left\| Q _ { 1 } ^ { i } - Q _ { 2 } ^ { i } \right\| \overset{\Delta}{=}\max _ { s , a ^ { i } , a ^ { - i } } \left| Q _ { 1 } ^ { i } \left( s , a ^ { i } , a ^ { - i } \right) - Q _ { 2 } ^ { i } \left( s , a ^ { i } , a ^ { - i } \right) \right|$. Let $\varepsilon = \left\| Q _ { 1 } ^ { i } - Q _ { 2 } ^ { i } \right\|$, then we have:

\begin{align}
    \log \sum_{a^{-i\prime}}P(a^{-i\prime}|s^\prime)\left(\sum_{a^{i\prime}}\exp \left(\frac{1}{\alpha}Q_1(s^\prime, a^{i\prime}, a^{-i\prime})\right)\right)^\alpha &\leq \log \sum_{a^{-i\prime}}P(a^{-i\prime}|s^\prime)\left(\sum_{a^{i\prime}}\exp \left(\frac{1}{\alpha}Q_2(s^\prime, a^{i\prime}, a^{-i\prime})+\varepsilon\right)\right)^\alpha \nonumber \\
    &= \log \sum_{a^{-i\prime}}P(a^{-i\prime}|s^\prime)\left(\sum_{a^{i\prime}}\exp \left(\frac{1}{\alpha}Q_2(s^\prime, a^{i\prime}, a^{-i\prime})\right)\exp (\varepsilon)\right)^\alpha \nonumber\\
    &= \log \sum_{a^{-i\prime}}P(a^{-i\prime}|s^\prime)\exp (\varepsilon)^\alpha\left(\sum_{a^{i\prime}}\exp \left(\frac{1}{\alpha}Q_2(s^\prime, a^{i\prime}, a^{-i\prime})\right)\right)^\alpha\nonumber\\
    &=\alpha\varepsilon+\log \sum_{a^{-i\prime}}P(a^{-i\prime}|s^\prime)\left(\sum_{a^{i\prime}}\exp \left(\frac{1}{\alpha}Q_2(s^\prime, a^{i\prime}, a^{-i\prime})\right)\right)^\alpha.
\end{align}
Similarly, $\log \sum_{a^{-i\prime}}P(a^{-i\prime}|s^\prime)\left(\sum_{a^{i\prime}}\exp \left(\frac{1}{\alpha}Q_1(s^\prime, a^{i\prime}, a^{-i\prime})\right)\right)^\alpha \geq -\alpha\varepsilon +\log \sum_{a^{-i\prime}}P(a^{-i\prime}|s^\prime)\left(\sum_{a^{i\prime}}\exp \left(\frac{1}{\alpha}Q_2(s^\prime, a^{i\prime}, a^{-i\prime})\right)\right)^\alpha.$ Therefore $\left\| \mathcal{T} Q _ { 1 } ^ { i } - \mathcal{T} Q _ { 2 } ^ { i } \right\| \leq \gamma \varepsilon = \gamma \left\| Q _ { 1 } ^ { i } - Q _ { 2 } ^ { i } \right\|$, where $\alpha=1$.

% \section{Results of ROMMEO on Other Games}
% \label{appendix:results}
% We also evaluate ROMMEO-Q on iterated coordination game (ICoG), stag hunt (ISH) and penalty game (IPG) and present the results here.

% \begin{figure}[H]
%     \centering
%     \includegraphics[scale=0.5]{figures/rommeo_learning_curves_for_coordination_0_0.pdf}\label{fig:lc_ic}
%     \includegraphics[scale=0.5]{figures/rommeo_learning_curves_for_stag_hunt.pdf}\label{fig:lc_ish}
%     % \subcaptionbox{\label{fig:lc_ic}}{\includegraphics[height=8\baselineskip]{figures/rommeo_learning_curves_for_coordination_0_0.pdf}} \quad
%     % \subcaptionbox{\label{fig:path_ish}}{\includegraphics[height=8\baselineskip]{figures/rommeo_learning_curves_for_stag_hunt.pdf}} \quad
%     \caption{(a): Learning curves of ROMMEO and baselines on ISH. (b): An example of ROMMEO escapes from local NE (hare, hare), where ROMMEO starts from a point very close to it. Each point on this plot shows the probability of choosing stag for each player.}
%     \label{fig:appendix-tabularq-ISH}
% \end{figure}

\end{document}